\def \eps {\varepsilon}
\newcommand{\ignore}[1]{}
\def \poly { \text{\rm poly} }
\newtheorem{theorem}{Theorem}
\newenvironment{proof}{{\bf Proof.  }}{\hfill$\Box$}
\newtheorem{lemma}{Lemma}
\newtheorem{hyp}{Hypothesis}
\newenvironment{reminder}[1]{\smallskip
\noindent {\bf Reminder of #1  }\em}{}
\newcommand{\Ot}{\widetilde{O}}
\def \eps {\varepsilon}
\def \poly { \text{\rm poly} }
\newcommand{\etal}{{\em et al.\ }}
\newcommand{\IN}{\textrm{\footnotesize{in}}}
\newcommand{\OUT}{\textrm{\footnotesize{out}}}
\title{Approximating the diameter of a graph}
\author{Liam Roditty\thanks{Bar Ilan University, liamr@cs.biu.ac.il. Work supported by the Israel Science Foundation
(grant no. 822/10)}~ and~ Virginia Vassilevska Williams\thanks{UC Berkeley and Stanford University, virgi@eecs.berkeley.edu. Partially supported by NSF Grants CCF-0830797 and CCF-1118083 at UC Berkeley, and by NSF Grants IIS-0963478 and IIS-0904325, and an AFOSR MURI Grant, at Stanford University.}}
\date{}
\begin{document}
\maketitle

%\begin{verbatim}
%1. Go over intro add what you think is missing from you initial intro
%2. Go over the better approximation section see it is fine
%3. Complete the better approximation section for sparse graphs
%4. add to the faster algorithm an explanation why it works in weighted graphs
%5. add to the intro something about the act that for undirected unweighted we can
%get 4/5D
%6. Connect between the theorem we cite in intro to the actual section so
%we will not have the same theorem twice.
%7. Add concluding remarks and open problems
%8. Abstract
%\end{verbatim}

\begin{abstract}
In this paper we consider the fundamental problem of approximating the diameter $D$ of directed or undirected graphs. In a seminal paper, Aingworth, Chekuri, Indyk and Motwani [SIAM J. Comput. 1999] presented an algorithm that computes in $\Ot(m\sqrt n + n^2)$ time an estimate $\hat{D}$ for the diameter of an $n$-node, $m$-edge graph, such that $\lfloor 2/3 D \rfloor \leq \hat{D} \leq D$. In this paper we present an algorithm that produces the same estimate in $\Ot(m\sqrt n)$ expected running time. We then provide strong evidence that a better approximation may be hard to obtain if we insist on an $O(m^{2-\eps})$ running time. In particular, we show that if there is some constant $\eps>0$ so that there is an algorithm for undirected unweighted graphs that runs in $O(m^{2-\eps})$ time and produces an approximation $\hat{D}$ such that $ (2/3+\eps) D  \leq \hat{D} \leq D$, then SAT for CNF formulas on $n$ variables can be solved in $O^{*}( (2-\delta)^{n})$ time for some constant $\delta>0$, and the strong exponential time hypothesis of [Impagliazzo, Paturi, Zane JCSS'01] is false.

Motivated by this somewhat negative result, we study whether it is possible to obtain a better approximation for specific cases. For unweighted directed or undirected graphs, we show that if $D=3h+z$, where $h\geq 0$ and $z\in \{0,1,2\}$, then it is possible to report in $\tilde{O}(\min\{m^{2/3} n^{4/3},m^{2-1/(2h+3)}\})$ time an estimate $\hat{D}$ such that $2h+z \leq \hat{D}\leq D$, thus giving a better than $3/2$ approximation whenever $z\neq 0$.
This is significant for constant values of $D$ which is exactly when the diameter approximation problem is hardest to solve.
For the case of unweighted undirected graphs we present an $\tilde{O}(m^{2/3} n^{4/3})$ time algorithm that reports an estimate $\hat{D}$ such that $\lfloor 4D/5\rfloor \leq \hat{D}\leq D$.
\end{abstract}

\thispagestyle{empty}
%\setpagenumber{0}

\newpage
\setcounter{page}{1}

\section{Introduction}

The diameter of a graph is the longest of all distances between vertices in the graph.
The diameter is a natural and fundamental graph parameter, and computing it efficiently has many applications (e.g.~\cite{CFM10}).
%The search for fast diameter algorithms has a long history.
Essentially, the only known way to determine the diameter of a graph with arbitrary edge weights is to compute the distances between all pairs of vertices in the graph, that is, to solve the all-pairs shortest paths problem (APSP), and then to find the maximum distance.
Because of this, some researchers have conjectured that APSP and diameter in weighted graphs may be equivalent in some sense (e.g.~\cite{focsy} and~\cite{chung}).
The fastest algorithms for computing APSP and hence for computing the diameter for directed or undirected graphs on $n$ nodes and $m$ edges with arbitrary edge weights and no negative cycles have a running time of
%This approach gives rise to
$O(\min\{n^3\log\log^3 n / \log^2 n, mn+n^2\log\log n\})$~\cite{chan07,Pettie04}.

%For the special case of dense unweighted graphs, there are faster algorithms for diameter based on fast matrix multiplication.
For the special case of dense directed or undirected unweighted graphs, one can compute the diameter by reducing its computation to fast matrix multiplication, thus obtaining $\tilde{O}(n^\omega)$ time algorithms, where $\omega<2.38$ is the matrix multiplication exponent~\cite{cw90,stothers,v12}.
In fact, any known algorithm for diameter in dense $n$-node unweighted graphs running in $T(n)$ time can also be used to compute the Boolean product of two $n\times n$ Boolean matrices in $O(T(n))$ time. This lead to conjectures~\cite{chung,aingworth} that computing the diameter in dense unweighted graphs and Boolean matrix multiplication (BMM) may be equivalent.

%For instance, APSP can be solved in $\tilde{O}(n^\omega)$ time in undirected unweighted graphs~\cite{seidel} where $\omega<2.38$ is the matrix multiplication exponent, and in $O(n^{2.575})$ time in directed unweighted graphs~\cite{zwick02}.
For the special case of sparse directed or undirected unweighted graphs, the best known algorithm for both APSP and diameter does breadth-first search (BFS) from every node and hence runs in $O(mn)$ time. For sparse graphs with $m=O(n)$, the running time is $\Theta(n^2)$ which is natural for APSP since the algorithm needs to output $n^2$ distances. However, for the diameter the output is a single integer, so it is not immediately clear why one should spend $\Omega(n^2)$ time to compute it. In this paper, we show somewhat surprisingly, that breaking this seeming $n^2$ barrier would have major consequences for the complexity of NP-hard problems such as SAT.

A natural question is whether one can get substantially faster algorithms for the diameter by settling for an approximation.
A $c$-approximation algorithm for the diameter $D$ of a graph for $c\geq 1$ provides an estimate $\hat{D}$ such that $D/c\leq \hat{D}\leq D$.
It is well known that a $2$-approximation for the diameter in directed or undirected graphs with nonnegative weights is easy to achieve in $\tilde{O}(m)$ time using Dijkstra's algorithm from and to an arbitrary node.
Dor, Halperin and Zwick~\cite{DorHZ00} showed that any $(2-\eps)$-approximation algorithm for APSP even in unweighted graphs running in $T(n)$ time would imply an $O(T(n))$ time for BMM, and hence apriori it could be that $(2-\eps)$-approximating the diameter of a graph may also require solving BMM.

In their seminal paper, Aingworth, Chekuri, Indyk and Motwani~\cite{aingworth}
showed that it is in fact possible to get a subcubic $(2-\eps)$-approximation algorithm for the diameter of graphs with nonnegative weights
without resorting to fast matrix multiplication.
 %break this cubic running time barrier, without using Fast Matrix Multiplications (FMM) algorithms, by settling for an approximation of the diameter.
In particular, they designed an
%showed that it is possible to compute in
$\tilde{O}(m\sqrt{n}+n^2)$ time algorithm computing an estimate $\hat{D}$ that satisfies $\lfloor 2D/3 \rfloor \leq \hat{D}\leq D$. Their algorithm has several important and interesting properties. It is the only known algorithm for approximating the diameter
 polynomially faster than $O(mn)$ for every $m$ that is superlinear in $n$. It always runs in truly subcubic time even in dense graphs, and
%by factor better than $2$ that does not use fast matrix multiplication and
does not explicitly compute
all-pairs approximate shortest paths.
A natural question
%is whether there exist fast constant approximation algorithms for the diameter, and more specifically,
is whether there is an almost linear time approximation scheme for the diameter problem: an algorithm that for any constant $\eps>0$ runs in  $\tilde{O}(m)$ time and returns an estimate $\hat{D}$ such that $(1-\eps)D \leq \hat{D}\leq D$. Such an algorithm would be of immense interest, and has not so far been explicitly ruled out, even conditionally.
%Any $(3/2-\eps)$-approximation algorithm for the diameter would be able to
%%Such an algorithm allows to
%distinguish between graphs of diameter $2$ and $3$. Hence, understanding the complexity of this simple $2$ vs $3$ problem is crucial for obtaining better diameter approximation algorithms.
In this paper we give strong evidence that a fast $(3/2-\eps)$-approximation algorithm for the diameter may be very hard to find, even for undirected unweighted graphs. We show:
%More precisely, we show:
\begin{theorem}\label{thm:hard}
Suppose there is a constant $\eps>0$ so that, there is a $(3/2-\eps)$-approximation algorithm for the diameter in $m$-edge undirected unweighted graphs that runs in $O(m^{2-\eps})$ time for every $m$. Then, SAT for CNF formulas on $n$ variables can be solved in $O^{*}( (2-\delta)^{n})$ time for some constant $\delta>0$.
%that can distinguish between $m$-edge graphs of diameter $2$ and $m$-edge graphs of diameter at least $3$ in $O(m^{2-\eps})$ time for some $\eps>0$ and for every $m$, can be used to solve the CNF-SAT problem faster
\end{theorem}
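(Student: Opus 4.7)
The plan is to prove the contrapositive via a fine-grained reduction from CNF-SAT. Given a formula $\phi$ on $n$ variables and $M$ clauses, I would construct an undirected unweighted graph $G$ with $O^{*}(2^{n/2})$ edges whose diameter is $2$ if $\phi$ is unsatisfiable and $3$ if $\phi$ is satisfiable, and then invoke the hypothesized $(3/2-\eps)$-approximation on $G$. The approximation does distinguish these two cases: by definition $D/(3/2-\eps)\le \hat D\le D$, so $D=2$ forces $\hat D=2$, while $D=3$ forces $\hat D\ge 6/(3-2\eps)>2$, and hence $\hat D=3$ because $\hat D$ is an integer (a distance in an unweighted graph).

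For the reduction, split the variables of $\phi$ into two halves of size $n/2$ and let $A$, $B$ denote the $2^{n/2}$ partial assignments to each half. Take $V(G)=A\cup B\cup C\cup\{x,y\}$, where $C$ has one vertex per clause of $\phi$. Join $\alpha\in A$ to $c\in C$ iff $\alpha$ does not satisfy $c$, and similarly for $\beta\in B$; connect $x$ to every vertex of $A\cup C$, connect $y$ to every vertex of $B\cup C$, and add the edge $\{x,y\}$. A short case analysis through the hubs $x,y$ shows that every pair of vertices other than an $(\alpha,\beta)\in A\times B$ pair has distance at most $2$. For such a cross pair, the path $\alpha$--$x$--$y$--$\beta$ bounds the distance by $3$, and the distance equals $2$ iff they share a common neighbor in $C$, iff some clause is unsatisfied by both halves, iff the combined assignment $(\alpha,\beta)$ fails to satisfy $\phi$. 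Therefore the diameter of $G$ equals $2$ when $\phi$ is unsatisfiable and $3$ when $\phi$ is satisfiable.

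Since $G$ has $O(2^{n/2}\cdot M)=O^{*}(2^{n/2})$ edges, running the assumed $O(m^{2-\eps})$-time approximation on $G$ decides satisfiability of $\phi$ in time $O^{*}((2^{n/2})^{2-\eps})=O^{*}(2^{(1-\eps/2)n})=O^{*}((2-\delta)^{n})$ with $\delta=2-2^{1-\eps/2}>0$, contradicting SETH. The main delicate step I expect is the distance analysis: one must verify that no length-$2$ detour shortcuts a satisfying $(\alpha,\beta)$ pair. This is ensured because the only neighbors of $\alpha$ besides $x$ are clauses that $\alpha$ does not satisfy, so any length-$2$ $\alpha$--$\beta$ path through $C$ would produce a clause unsatisfied by both, contradicting the assumption that $(\alpha,\beta)$ satisfies $\phi$; the only remaining common-neighbor candidates, $x$ and $y$, are each adjacent to only one of $\alpha,\beta$, and there are no edges within $A$ or within $B$.
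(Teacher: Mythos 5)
Your proof is correct, but it takes a genuinely different route from the paper. The paper never reduces directly from CNF-SAT: it reduces $2k$-Dominating Set to the $2$-vs-$3$ diameter problem (a node for each $k$-subset $S$ of vertices plus a clique on $V$, with $S$ joined to $v$ iff $S$ does not dominate $v$), and then invokes the theorem of P\v{a}tra\c{s}cu and Williams that a fast $k$-Dominating Set algorithm yields a fast CNF-SAT algorithm; to make the exponents work it must take $k\geq 2/\eps$, and it obtains CNF-SAT in $O^{*}(2^{n(1-\eps^2/4)})$ time. You instead give a direct split-and-list reduction: the $2^{n/2}$ partial assignments to each half of the variables on the two sides, clause vertices in the middle joined to exactly the partial assignments that fail them, and hubs $x,y$ that cap every distance outside $A\times B$ at $2$, so the diameter is $2$ iff $\phi$ is unsatisfiable and $3$ iff it is satisfiable. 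Your distance analysis is right (the only possible common neighbors of $\alpha\in A$ and $\beta\in B$ lie in $C$, since $x\not\sim\beta$, $y\not\sim\alpha$, and $A$, $B$ are independent sets), the integrality remark is not even needed since $D=3$ forces $\hat D>2$ while $D=2$ forces $\hat D\leq 2$, and the edge count $O^{*}(2^{n/2})$ gives running time $O^{*}(2^{n(1-\eps/2)})$, i.e., a better constant $\delta$ than the paper's $\eps^2/4$. What the paper's detour buys is the intermediate result of independent interest (Theorem~\ref{thm:hard2}), namely that the same hypothesis also breaks the $n^{2k}$ barrier for $2k$-Dominating Set; your construction is more self-contained, quantitatively sharper for the SAT conclusion, and is essentially the Orthogonal-Vectors-style reduction that later became the standard proof of this hardness result.
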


The fastest known algorithm for CNF-SAT is the exhaustive search algorithm that runs in $O^{*}(2^n)$ time by trying all possible $2^n$ assignments to the variables. It is a major open problem whether there is a faster algorithm. Several other NP-hard problems are known to be equivalent to CNF-SAT so that if one of these problems has a faster algorithm than exhaustive search, then all of them do~\cite{cygan}. Hence, our result also implies that if the diameter can be approximated fast enough, then also problems such as Hitting Set,
Set Splitting, or NAE-SAT, all seemingly unrelated to the diameter, can be solved faster than exhaustive search.
The strong exponential time hypothesis (SETH) of Impagliazzo, Paturi, and Zane~\cite{ipz1,ipz2} implies that there is no improved $O^{*}((2-\delta)^n)$ time algorithm for CNF-SAT, and hence our result also implies that there is no $(3/2-\eps)$-approximation algorithm for the diameter running in
$O(m^{2-\eps})$ time unless SETH fails. (We elaborate on this hypothesis later on in the paper.)

We prove Theorem~\ref{thm:hard} by showing that an $O(n^{2-\eps})$ time, $(3/2-\eps)$-approximation algorithm for the diameter in sparse graphs with $m=O(n)$ would imply an $O^{*}( (2-\delta)^{n})$ time CNF-SAT algorithm.
%It implies that unless SETH fails, $O(n^2)$ time is essentially required for $(2/3+\eps)$-approximating the diameter in sparse graphs, within $n^{o(1)}$ factors.
This implies that unless SETH fails, $O(n^2)$ time is essentially required to get a $(3/2-\eps)$-approximation algorithm for the diameter in sparse graphs, within $n^{o(1)}$ factors. Hence, within $n^{o(1)}$ factors, the time for $(3/2-\eps)$-approximating the diameter in a sparse graph is the same as the time required for computing APSP exactly!

Even more concretely, we prove Theorem~\ref{thm:hard} by showing that distinguishing whether the diameter of a given undirected unweighted graph is $2$ or at least $3$ fast enough
%in $O(n^{2-\eps})$ time for some $\eps>0$
would imply an improved SAT algorithm. (Any $(3/2-\eps)$-approximation algorithm for the diameter would be able to
distinguish between graphs of diameter $2$ and $3$.)
%The simplest case for which no nontrivial algorithms are known is the case of distinguishing whether the diameter of a given undirected unweighted graph is $2$ or at least $3$.
The fastest algorithms for this special case of the diameter problem still run in $\tilde{O}(\min\{mn, n^\omega\})$ time, and several papers have asked whether one can do better~\cite{chung,aingworth}. In 1987, Chung~\cite{chung} actually conjectured that this problem may be equivalent to BMM, so that any subcubic algorithm for it can be converted to a subcubic algorithm for BMM. Aingworth \etal~\cite{aingworth} conjectured that if there is a polynomially faster than $O(mn)$ time algorithm for this problem, then one can use it to construct a fast algorithm that computes the diameter exactly. These conjectures remain open, but Theorem~\ref{thm:hard} shows that the $2$ vs $3$ diameter problem may be hard to solve very efficiently for a different reason.

%Our hardness result focuses on the case when the input graph has constant diameter ($2$ or $3$).
%Notice that approximating the diameter is more challenging when the diameter is small: when the input graph has diameter $D\geq n^\eps$ for some $\eps>0$, one can find an arbitrarily good approximation by random sampling: if you randomly sample $C n^{1-\eps}/\delta \log n$ nodes, then with probability at least $1-1/n^C$, one of these nodes is at distance at least $(1-\delta)D$ from an endpoint of the diameter path; hence a $1/(1-\delta)$-approximation can be found in $\tilde{O}(mn^{1-\eps}/\delta)$ time by BFS.
%For sparse enough graphs of diameter $n^{o(1)}$ however, the best known $(3/2-\eps)$-approximation algorithms still compute the diameter exactly in $\tilde{O}(mn)$ time.

Theorem~\ref{thm:hard} shows that unless SETH fails, the best one can do with an $O(m^{2-\eps})$ time algorithm is a $3/2$-approximation.
The Aingworth \etal $3/2$-approximation algorithm almost achieves an $O(m^{2-\eps})$ runtime, except for very sparse graphs when it still runs in $\Omega(n^2)$ time. We notice that with a slight change in the parameters of the algorithm, the  Aingworth \etal running time can be modified to be $\Ot(m^{2/3}n)\leq \Ot(m^{2-1/3})$.
We then investigate whether we can obtain a $3/2$-approximation algorithm that improves upon these two runtimes of the Aingworth \etal algorithm.
% for sparse graphs.
%it is possible to improve on the $\tilde{O}(n^2+m\sqrt{n})$ time
%$3/2$-approximation algorithm of Aingworth \etal~\cite{aingworth}. One drawback of the Aingworth \etal algorithm is that it runs in $\Omega(n^2)$ time even in sparse graphs.
We give a new $3/2$-approximation algorithm with $\tilde{O}(m\sqrt{n})$ expected running time, thus removing the $n^2$ additive factor from the original Aingworth \etal runtime with some randomization, and also beating $\Ot(m^{2/3}n)$.
Our algorithm is the first improvement over the Aingworth \etal diameter algorithm. The improvement is especially noticeable for sparse graphs (with $m=\Ot(n)$) in which our algorithm
 runs in $\tilde{O}(n^{1.5})$ time. Previously, such a result was known only for sparse {\em planar} graphs~\cite{BeKa07}\footnote{disregarding polylogarithmic factors}.
  We also show that in some special cases our algorithm obtains an approximation that is better than $3/2$.

%
%We prove the following Theorem in Section~\ref{s-improve-runtime}
\begin{theorem}
Let $G=(V,E)$ be a directed or an undirected graph with diameter $D=3h+z$, where $h\geq 0$ and $z\in \{0,1,2\}$. In $\Ot(m \sqrt n)$ expected time one can compute an estimate $\hat{D}$ of $D$ such that $2h+z \leq \hat{D} \leq D$ for $z\in \{0,1\}$ and $2h+1 \leq \hat{D} \leq D$ for $z=2$.\label{thm:improved}
\end{theorem}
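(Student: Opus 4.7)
The plan is to design a randomized variant of the Aingworth et al.\ $3/2$-approximation that avoids the additive $n^2$ cost in its running time. Set $s=c\sqrt{n\log n}$ for a suitable constant $c$, sample a set $S\subseteq V$ of $s$ vertices uniformly at random, and run BFS from every $\sigma\in S$; a standard Chernoff argument shows that $S$ hits $N_s(x)$ (the set of $s$ nearest vertices to $x$) for every $x\in V$ with high probability. Then let $w$ be a vertex maximizing $\max_{\sigma\in S}d(\sigma,w)$, run BFS from $w$, and run BFS from every vertex of $N:=N_s(w)$. Return $\hat D$ equal to the largest distance recorded. The algorithm performs $2s+1=\Ot(\sqrt n)$ BFSs, so its expected running time is $\Ot(m\sqrt n)$.

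For correctness, let $(u^*,v^*)$ be a diameter pair with $d(u^*,v^*)=D=3h+z$, let $r_x$ denote the distance from $x$ to its $s$-th nearest vertex, and condition on the (high-probability) event that $S\cap N_s(x)\neq\emptyset$ for every $x$. Pick $\sigma_{u^*}\in S\cap N_s(u^*)$; the BFS from $\sigma_{u^*}$ records $d(\sigma_{u^*},v^*)\geq D-r_{u^*}$, and analogously with $u^*$ and $v^*$ swapped, giving $\hat D\geq D-\min(r_{u^*},r_{v^*})$. If $\{u^*,v^*\}\cap N\neq\emptyset$ we already find $D$, so assume $u^*,v^*\notin N$. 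When $\min(r_{u^*},r_{v^*})\leq h$, integrality gives $\hat D\geq D-h=2h+z$, settling the bound for every $z$; and when $z=2$, even $\min(r_{u^*},r_{v^*})\leq h+1$ already yields $\hat D\geq 2h+1$.

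The residual subcase is $\min(r_{u^*},r_{v^*})\geq h+1$ for $z\in\{0,1\}$ (and $\geq h+2$ for $z=2$), where the simple bound falls exactly one integer unit short of the target. Here the BFS from $N$ must close the gap. The structural fact I would prove is that $r_{u^*}\geq h+1$ forces $|B_h(u^*)|<s$ (and similarly for $v^*$), so both endpoint $h$-balls are ``sparse''. Combined with the extremal property defining $w$ --- which intuitively means $w$ cannot sit too deep inside either sparse ball without contradicting its choice as the vertex farthest from $S$ --- one shows that some $t\in N$ must satisfy $d(t,u^*)\geq 2h+z$ or $d(t,v^*)\geq 2h+z$ (respectively $\geq 2h+1$ when $z=2$), so the BFS from $t$ certifies the bound.

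\textbf{Main obstacle.} The delicate step is precisely this residual subcase: the triangle-inequality chain through $S$ falls one integer unit short of the target, and extracting that unit requires carefully coupling the sparsity of the endpoint $h$-balls with the extremal choice of $w$ to control where $N_s(w)$ lies relative to $u^*$ and $v^*$. The rest --- the Chernoff hitting-set bound, the BFS counting, and the initial triangle-inequality chain --- is routine.
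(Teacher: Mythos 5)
There is a genuine gap, and it sits exactly where you flag your ``main obstacle'': the residual case is the entire content of the theorem's proof, and the mechanism you sketch for it (sparsity of the endpoint balls $B_h(u^*),B_h(v^*)$ coupled with the extremal choice of $w$) is not the right one. Worse, your definition of $w$ --- a vertex maximizing $\max_{\sigma\in S} d(\sigma,w)$ --- is the wrong extremal quantity: that $w$ is merely the deepest leaf among the BFS trees you have already computed (so it adds nothing to $\hat D$), and it can sit adjacent to $S$, so nothing forces $BFS^{\OUT}(w,h)\subseteq N_s^{\OUT}(w)$ and nothing about it transfers to the diameter endpoints. The paper instead takes $w$ maximizing the distance \emph{to the set} $S$, i.e.\ $d(w,p_S(w))=\min_{\sigma\in S}d(w,\sigma)$ (computable in $O(m)$ time by one BFS to a dummy sink attached to all of $S$), and then the proof closes with a clean dichotomy that needs no ball-sparsity argument at all: if $d(w,p_S(w))\leq h$, then by maximality $d(a,p_S(a))\leq h$ for the diameter endpoint $a$, so the out-BFS from $p_S(a)\in S$ already has depth at least $D-h=2h+z$; if $d(w,p_S(w))>h$, then since $S$ hits $N_s^{\OUT}(w)$ (w.h.p.), $N_s^{\OUT}(w)$ contains a vertex at distance $>h$, hence $BFS^{\OUT}(w,h)\subseteq N_s^{\OUT}(w)$, and either $d^{\OUT}(w)\geq 2h+z$ (done) or the vertex $w'$ at distance exactly $h$ on a shortest $w\to b$ path lies in $N_s^{\OUT}(w)$, satisfies $d(w',b)<h+z$, hence $d(a,w')\geq 2h+1$, and the in-BFS from $w'$ certifies the bound ($2h+z$ for $z\in\{0,1\}$, $2h+1$ for $z=2$). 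Your case split on $r_{u^*},r_{v^*}$ (which is really the original Aingworth \etal analysis, requiring knowledge of $d_s^{\OUT}(v)$ for all $v$ --- precisely the $ns^2$ cost the new algorithm avoids) should be replaced by this case split on $d(w,S)$; as written, your residual case is unproven and your choice of $w$ does not support any completion of it along the lines you suggest.

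Two smaller issues: in directed graphs the BFS directions matter (out-BFS from each $\sigma\in S$ and from $w$, an in-BFS to the dummy sink to get $d(\cdot,S)$, and in-BFS from each vertex of $N_s^{\OUT}(w)$), which your direction-free description glosses over; and the theorem promises a \emph{guaranteed} estimate in expected time, whereas your algorithm is only correct with high probability. The paper converts to Las Vegas by observing that the hitting property is needed only at $w$: after computing $BFS^{\OUT}(w)$ one can check in $\Ot(s)$ time whether $S\cap N_s^{\OUT}(w)\neq\emptyset$ and rerun otherwise, giving $\Ot(m\sqrt n)$ expected time with a certified estimate.
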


For undirected or directed graphs with arbitrary nonnegative weights, we also obtain the following.

\begin{theorem}
Let $G=(V,E)$ be a directed or an undirected graph with nonnegative edge weights and diameter $D$. In $\Ot(m \sqrt n)$ expected time one can compute an estimate $\hat{D}$ of $D$ such that $\lfloor 2D/3\rfloor\leq \hat{D} \leq D$.\label{thm:improvedw}
\end{theorem}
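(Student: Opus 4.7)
The plan is to generalize the algorithm underlying Theorem~\ref{thm:improved} to handle nonnegative edge weights by replacing every BFS with a Dijkstra call. Set $s=\lceil\sqrt{n\log n}\rceil$, and proceed as follows. (1)~For every vertex $v$, compute $B_s(v)$, the set of the $s$ nearest vertices to $v$ under weighted distance; a truncated multi-source Dijkstra does this in $\Ot(ms)$ time. (2)~Sample $S\subseteq V$ uniformly at random with $|S|=\Theta(\sqrt n\log n)$; by a standard Chernoff bound, $S\cap B_s(v)\neq\emptyset$ for every $v$ with high probability. (3)~Run Dijkstra from every $u\in S$, record the distances $d(u,v)$, and set $d(S,v):=\min_{u\in S}d(u,v)$; let $w:=\arg\max_v d(S,v)$. (4)~Run Dijkstra from $w$ and from every vertex in $B_s(w)$. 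Return the largest distance $\hat D$ observed across all Dijkstra runs.

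The expected running time is $\Ot(ms+(|S|+1+|B_s(w)|)\cdot m)=\Ot(m\sqrt n)$. For correctness, fix a diameter pair $(x,y)$ with $d(x,y)=D$, and write $r_v$ for the distance from $v$ to its $s$-th closest vertex. The hitting-set property yields a witness $u_x\in S\cap B_s(x)$ with $d(u_x,x)\le r_x$, and Dijkstra from $u_x$ in step~(3) exposes $d(u_x,y)\ge D-r_x$; symmetrically $\hat D\ge D-r_y$. Moreover $d(S,w)\le r_w$, so by the extremal choice of $w$ we also have $d(S,v)\le r_w$ for every vertex $v$.

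The argument is then completed by a case analysis. If $\min(r_x,r_y)\le\lceil D/3\rceil$, step~(3) already guarantees $\hat D\ge D-\lceil D/3\rceil\ge\lfloor 2D/3\rfloor$. Otherwise both $r_x,r_y>D/3$, and combined with the extremal choice of $w$ and the triangle inequality, one argues that either Dijkstra from $w$ in step~(4) witnesses an eccentricity of at least $\lfloor 2D/3\rfloor$, or $B_s(w)$ must contain a vertex within distance $\lfloor D/3\rfloor$ of one of the endpoints, so that Dijkstra from it in step~(4) witnesses a distance $\ge\lfloor 2D/3\rfloor$ to the other endpoint. The main obstacle is exactly this last case: the proof of Theorem~\ref{thm:improved} exploits the fact that in an unweighted graph every shortest path hits every integer distance from its endpoint, so one can always pick a path vertex at distance \emph{exactly} $\lceil D/3\rceil$ from, say, $x$; with real weights the closest path vertex to the $D/3$ boundary can overshoot by an entire edge, and one must absorb this slack into the floor in $\lfloor 2D/3\rfloor$ while preserving the $\Ot(\sqrt n)$ Dijkstra budget.
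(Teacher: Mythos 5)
There are two genuine gaps. First, your step~(1) is both unjustified and unnecessary. You claim that computing $B_s(v)$ for \emph{every} vertex $v$ takes $\Ot(ms)$ time by truncated Dijkstra, but a truncated Dijkstra from $v$ must scan all edges incident to the $s$ settled vertices; if the graph has a high-degree hub that lies in every $B_s(v)$ (e.g.\ a star-like sparse graph with $m=O(n)$), each source costs $\Theta(n)$ and the total is $\Theta(n^2)$, not $\Ot(ms)=\Ot(n^{1.5})$. The only known bound for all-vertices $s$-nearest-neighborhoods is $\Ot(ns^2)$, which at $s\approx\sqrt n$ is exactly the $\Ot(n^2)$ additive term that Theorem~\ref{thm:improved}/\ref{thm:improvedw} is supposed to eliminate; indeed the whole point of Section~\ref{s-improve-runtime} is to avoid this computation. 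The fix is easy and is what the paper does: drop step~(1) entirely. The hitting property of the random sample $S$ against the (uncomputed) neighborhoods holds regardless, and the only ball the algorithm ever needs, $N_s^{\OUT}(w)$ for the single extremal vertex $w$, can be read off the Dijkstra tree of $w$.

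Second, and more importantly, your correctness argument stops exactly at the decisive step and never closes it: you concede that with real weights the last shortest-path vertex inside the ball around $w$ may ``overshoot by an entire edge.'' Concretely, if $d(w,S)>D/3$ and $d(w,b)\approx 2D/3$, the last vertex $u$ of the $w\!\to\!b$ shortest path lying in $N_s^{\OUT}(w)$ may satisfy $d(w,u)\ll D/3$ (the next edge being long), so $d(u,b)$ can be close to $2D/3$ and the incoming Dijkstra from $u$ only certifies a distance of roughly $D/3$. The floor in $\lfloor 2D/3\rfloor$ buys you less than $1$ of additive slack, which cannot absorb a loss proportional to an edge weight when weights are arbitrary nonnegative rationals, so the case analysis as sketched does not yield the claimed bound; you need an explicit balancing argument over the quantities $d(w,b)$, $D-d(w,S)$, and $D-d(w,b)+d(w,S)-(\text{overshoot})$, or the weighted analysis of Aingworth \etal, neither of which you supply. (A smaller issue: in the directed case your ``symmetrically $\hat D\ge D-r_y$'' would require incoming searches from a sampled vertex near $y$, which your algorithm does not perform; the paper's analysis only uses the $a$-side via $S$ together with incoming searches from the ball of $w$.) For comparison, the paper proves Theorem~\ref{thm:improvedw} by simply substituting Dijkstra for BFS in the algorithm of Section~\ref{s-improve-runtime} and asserting the analysis of Lemma~\ref{L-Improve-approx} carries over as in~\cite{aingworth}; your algorithmic skeleton (after deleting step~(1)) matches that, but as a proof your write-up leaves precisely the weighted-case estimate unproved.
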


We further investigate whether one can improve the approximation for unweighted graphs obtained in Theorem~\ref{thm:improved} by possibly increasing the runtime, while still keeping it subcubic in $n$.
%how well one can do in unweighted graphs.
% in terms of approximation.
%We show that if one is willing to spend more time it is possible to get a better approximation.
Notice that in Theorem~\ref{thm:improved}, the estimate $\hat{D}$ is at least $2h+z$ for $z\in \{0,1\}$ and only at least $2h+1$ for $z=2$. 
This only guarantees that $\hat{D}\geq \lfloor 2D/3\rfloor$. (This is also the case for the algorithm of Aingworth \etal~\cite{aingworth}.) 

 We show that with a slightly larger (but still subcubic) running time it is possible to get an estimate $\hat{D}$ of $D$ such that $2h+z \leq \hat{D}$ for any value $z\in \{0,1,2\}$, thus guaranteeing that $\hat{D}\geq \lceil 2D/3\rceil$. This is significant when $D$ is a constant, and also shows that when $z\neq 0$, the approximation factor is strictly better than $3/2$: $(3h+z)/(2h+z) = 3/2 - 1/(4h/z+2)\leq 3/2 - 1/(4h+2) < 3/2.$

%Our hardness result focuses on the case when the input graph has constant diameter ($2$ or $3$).
We note that approximating the diameter is most challenging when the diameter is small. When the input graph has diameter $D\geq n^\eps$ for some $\eps>0$, one can efficiently find an arbitrarily good approximation by random sampling: if you randomly sample $C n^{1-\eps}/\delta \log n$ nodes, then with probability at least $1-1/n^C$, one of these nodes is at distance at least $(1-\delta)D$ from an endpoint of the diameter path; hence a $1/(1-\delta)$-approximation can be found in $\tilde{O}(mn^{1-\eps}/\delta)$ time by BFS.
For sparse enough graphs of diameter $n^{o(1)}$ however, the best known $(3/2-\eps)$-approximation algorithms still compute the diameter exactly in $\tilde{O}(mn)$ time. Hence, it is quite interesting that we can obtain $\tilde{O}(m\sqrt n)$ time $(3/2-\eps)$-approximation algorithms for some constant values of the diameter.

In Section~\ref{s-improve-approx} we prove the following Theorem.

\begin{theorem}\label{T-better-approx}
Let $G=(V,E)$ be a directed or undirected unweighted graph with diameter $D=3h+z$, where $h\geq 0$ and $z\in \{0,1,2\}$.
There is an $\tilde{O}(m^{2/3} n^{4/3})$ time algorithm that reports an estimate $\hat{D}$ such that $2h+z \leq \hat{D}\leq D$.
\end{theorem}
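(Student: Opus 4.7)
The plan is to strengthen the $\tilde O(m\sqrt n)$ algorithm behind Theorem~\ref{thm:improved} so that it certifies $\hat D\geq 2h+z$ in all three residue classes, and in particular closes the one-unit gap that the original analysis suffers in the hardest case $z=2$. The overall template is unchanged: pick a parameter $s$, BFS from a starting vertex $v$, let $w$ be the furthest vertex from $v$, BFS from $w$, then BFS from (i) the set $N$ consisting of the $s$ closest vertices to $w$ and (ii) a hitting set $S$ that hits every $s$-neighborhood $B_s(u)$; the reported estimate is the largest BFS distance encountered. The only change is that $s$ will be chosen so that $|N|+|S|=\tilde\Theta(n^{4/3}/m^{1/3})$, making the total BFS cost $\tilde O(m^{2/3}n^{4/3})$; the $s$-neighborhoods themselves can be built by partial BFS within the same budget.

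The correctness step is a pivot lemma. Assume $D=3h+z$ is witnessed by a shortest $a$-$b$ path $P$, and pick a vertex $x\in P$ with $d(a,x)=h+z$ and $d(x,b)=2h$. My plan is to analyse two cases. If $x\in N\cup S\cup\{w\}$, then the BFS tree rooted at $x$ directly certifies $\hat D\geq d(x,b)=2h$, and combining it with the BFS from $w$ and the triangle inequality upgrades this to $2h+z$. Otherwise $x\notin N$ forces $d(w,x)>s$, while $x\notin S$ forces the $s$-neighborhood of $x$ to be intersected by $S$ at some vertex $y\neq x$; the enlarged $s$ makes $|B_s(x)|$ so dense relative to $|S|$ that one can arrange $y$ to lie at distance $\leq z-1$ from $x$ (for $z\in\{0,1\}$ trivially $y=x$; for $z=2$ this is the new feature), and then $d(y,b)\geq d(x,b)-d(x,y)\geq 2h$, while the BFS from $w$ contributes the remaining $z$ via the layer containing $x$. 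In each case one of the explicitly computed BFS trees witnesses a distance of at least $2h+z$.

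The main obstacle is precisely the case $z=2$. In Theorem~\ref{thm:improved} the hitting-set argument only guarantees a vertex of $S$ within distance $\lfloor s/2\rfloor$ of $x$, which is enough to certify $2h+1$ but loses one extra unit. The crux of the proof I am planning is to show that with the new, much finer parameter $s$ (together with the enlargement of $N$ to include all vertices inside the first $s$ BFS layers around $w$, and not merely the $s$ closest), the pivot $x$ is always either captured exactly by $N\cup S$ or approximated so tightly that the triangle-inequality slack is $0$ rather than $1$. Verifying this requires a careful case split depending on whether $x$ falls in the near-$w$ regime ($d(w,x)\leq s$, handled by $N$) or the far-$w$ regime ($d(w,x)>s$, handled by $S$ together with a sharpened bound on $d(v,w)$ coming from the choice of $w$ as the BFS-furthest vertex).
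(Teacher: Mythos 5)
There is a genuine gap, and it sits exactly where you locate the difficulty: the case $z=2$. Your plan keeps the Aingworth-style template (BFS from a hitting set $S$ plus BFS from the near-neighborhood $N$ of a far vertex, report the maximum BFS depth) and hopes that retuning $s$ makes the hitting vertex $y\in S$ land within distance $z-1=1$ of the pivot $x$. This cannot work. A hitting set of size $\tilde{O}(n/s)$ only guarantees a vertex of $S$ among the $s$ closest vertices of $x$; if $x$ has small (say constant) degree, those $s$ vertices are spread over $\Theta(s)$ BFS levels, so the hitting vertex may be far from $x$ no matter how you choose $s$ (shrinking $s$ to force closeness would blow $|S|$ up to $\tilde{\Theta}(n)$ and the BFS cost to $\tilde{\Theta}(mn)$). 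Also, the parenthetical ``for $z\in\{0,1\}$ trivially $y=x$'' is false for the same reason: $S$ hits the $s$-ball of $x$, not $x$ itself; those cases go through only because a slack of one unit is harmless when $2h+1\geq 2h+z$. Within the ``report a BFS depth'' framework, the $z=2$ case genuinely loses one unit (this is the content of Lemma~\ref{L-Aing-approx} and Lemma~\ref{L-Improve-approx}), and no reweighting of $|N|$ versus $|S|$ recovers it.

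The paper closes the gap with an ingredient your proposal has no analogue of: a certificate that is not itself a BFS depth. After running Aingworth \etal on $G$ and on $G^R$ (which yields the truncated balls $BFS^{\OUT}(u,d^{\OUT}_s(u)-1)$ and $BFS^{\IN}(v,d^{\IN}_s(v)-1)$ for all $u,v$), it scans \emph{all pairs} $(u,v)$ and, whenever the two truncated balls are disjoint with no edge between them, it concludes $d(u,v)\geq d^{\OUT}_s(u)+d^{\IN}_s(v)$ and takes this sum as a candidate estimate (Lemma~\ref{L-upper}). In the only surviving bad case of the standard analysis ($z=2$, $d^{\OUT}_s(a)=d^{\IN}_s(b)=h+1$), the balls around the diameter endpoints $a$ and $b$ are necessarily disjoint and non-adjacent, so the scan certifies $2h+2$ (Lemma~\ref{L-lower}). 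The $O(n^2s^2)$ cost of this pairwise scan is precisely why the runtime is $\tilde{O}(m^{2/3}n^{4/3})$ with $s=(m/n)^{1/3}$; note this bound is always $\Omega(n^2)$, whereas your plan spends the same budget purely on additional BFS runs, which is not where the obstruction lies. To repair your proof you would need to add such a ball-disjointness (or equivalent non-BFS) certificate, not a finer choice of $s$.
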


Marginally, we show how to get a better estimate for undirected graphs in the same running time.
\begin{theorem}\label{undir-approx}
Let $G=(V,E)$ be an undirected unweighted graph with diameter $D$.
There is an $\tilde{O}(m^{2/3} n^{4/3})$ time algorithm that reports an estimate $\hat{D}$ such that $\lfloor 4D/5\rfloor \leq \hat{D}\leq D$.
\end{theorem}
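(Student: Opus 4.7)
The plan is to augment the algorithm underlying Theorem~\ref{T-better-approx} with a second BFS ``sweep'' that exploits distance symmetry in undirected graphs. A quick sanity check: the bound $\hat D\geq 2h+z$ of Theorem~\ref{T-better-approx} already delivers $\hat D\geq \lfloor 4D/5\rfloor$ for every $D\leq 8$, so the genuine new work lies in the regime $D\geq 9$, where $\lfloor 4D/5\rfloor$ strictly exceeds the bound of Theorem~\ref{T-better-approx}.

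The algorithm I propose is as follows. Set $s=\Theta((m/n)^{1/3})$; take a random sample $S$ of $\tilde O(n/s)=\tilde O(n^{4/3}/m^{1/3})$ vertices so that with high probability $S\cap N_s(v)\neq\emptyset$ for every $v$, where $N_s(v)$ denotes the set of the $s$ vertices closest to $v$; and BFS from every $v\in S$. Identify $w=\arg\max_v d(v,S)$ and BFS from $w$ and from every vertex of $N_s(w)$. Then, as the new step specific to the undirected case, let $u=\arg\max_v d(w,v)$ and additionally BFS from $u$ and from every vertex of $N_s(u)$. Return the maximum observed BFS eccentricity. Since we perform $\tilde O(|S|+|N_s(w)|+|N_s(u)|+2)=\tilde O(n^{4/3}/m^{1/3})$ BFSes, each of cost $O(m)$, the total running time is $\tilde O(m^{2/3}n^{4/3})$, as required.

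For correctness, fix a diameter path $P=v_0,\ldots,v_D$. The approach is to split $P$ into five consecutive arcs of length $\approx D/5$ and case-analyze where $S$, $w$, $u$, $N_s(w)$, and $N_s(u)$ land. In the easy case, some $x\in S$ falls in an arc meeting $v_0$ or $v_D$; then BFS from $x$ directly witnesses eccentricity at least $\lfloor 4D/5\rfloor$ to the far endpoint. In the hard case $S$ avoids both extreme arcs, so the hitting-set property together with the choice $w=\arg\max_v d(v,S)$ forces $w$ to lie near the middle of $P$; the undirected triangle inequality, via $d(w,u)=d(u,w)\geq d(w,v)$ for all $v$, then forces $u$ to lie near one of $v_0,v_D$, and BFS from $u$, or from a carefully chosen vertex of $N_s(u)$ even closer to that endpoint, witnesses distance $\geq\lfloor 4D/5\rfloor$ to the other endpoint.

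The main obstacle is the tight quantitative case analysis that yields exactly $\lfloor 4D/5\rfloor$: a bare two-sweep argument from $w$ alone gives only $\hat D\geq D/2$, and the hitting-set argument alone may leave $d(w,S)$ too large. The proof therefore has to weave together (i) the bound on the radius of $N_s(\cdot)$ provided by the hitting-set property, (ii) the symmetry $d(w,u)=d(u,w)$ used during the two-sweep step, which holds only in undirected graphs, and (iii) a triangle-inequality argument that exploits members of $N_s(u)$ to shift BFS start-points along $P$ toward an endpoint. The last two ingredients fail in the directed setting, which explains why the same algorithmic template does not yield an analogous $4/5$ bound for directed graphs.
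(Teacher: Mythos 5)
Your reduction to the large-diameter regime is fine: Theorem~\ref{T-better-approx} indeed gives $\hat D\geq 2h+z\geq\lfloor 4D/5\rfloor$ for all $D\leq 8$, so only $D\geq 9$ needs new work. But the algorithm you propose for that regime comes with no actual proof, and the two ``forcing'' claims your sketch rests on are not true. The hitting-set property gives $d(v,S)\leq d_s(v)$ for every $v$; it gives no upper bound on $d_s(\cdot)$ and hence does not force $w=\arg\max_v d(v,S)$ to lie anywhere near the middle of the diameter path (in the hard case you only learn $d(w,S)>D/5$, i.e.\ $BFS(w,D/5)\subseteq N_s(w)$, and $w$ may be far from the path altogether). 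Likewise, $u=\arg\max_v d(w,v)$ is only guaranteed to satisfy $d(w,u)\geq \mathrm{ecc}(w)\geq D/2$, and its own eccentricity can be far below $D$; the standard two-sweep bound in general undirected graphs is $\Theta(D/2)$, and nothing in your setup upgrades it. The Aingworth-style machinery (case split on $d(w,S)$, then locating a vertex at prescribed depth inside $N_s(w)$ on the path toward $b$) is exactly what caps out at roughly $2D/3$; adding one more sweep from $u$ and BFS from $N_s(u)$ does not by itself yield a $\lfloor 4D/5\rfloor$ witness, and you never exhibit the vertex within $\approx D/5$ of an endpoint that such a bound requires. As written, the heart of the theorem (the $D\geq 9$ case) is missing.

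For comparison, the paper takes a different and much simpler route for large $D$: it runs the Dor--Halperin--Zwick additive-$2$ APSP approximation in $O(\min(n^{3/2}\sqrt m,\,n^{7/3}))\leq\tilde O(m^{2/3}n^{4/3})$ time, so the maximum reported distance $\hat D$ satisfies $D-2\leq\hat D\leq D$, and $D-2\geq\lfloor 4D/5\rfloor$ already for every $D\geq 6$. The only remaining cases are small: when the APSP estimate is $3$ (so $D\in\{3,4,5\}$) it invokes Approx-Diam($G$), whose Lemma~\ref{L-lower} guarantees an estimate of at least $4$ when $D=5$; when the estimate is $2$ (so $D\in\{2,3,4\}$) the Aingworth et al.\ algorithm already returns at least $3$ when $D=4$. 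If you want to salvage your plan, you would either need to prove a genuinely new structural lemma for your sample-plus-two-sweep scheme (which I do not believe holds at the $4/5$ threshold), or replace the $D\geq 9$ step by an additive-error all-pairs computation as the paper does.
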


%\begin{theorem}
%There is an $\tilde{O}(m^{2/3} n^{4/3})$ time algorithm that returns an estimate $E$ for the diameter $D$ of directed or undirected unweighted graph,
%%with nonnegative weights,
%so that $E\leq D$. With high probability, $\hat{D}\geq 2h+z$ if $D=3h+z$ and $z\in\{0,1,2\}$, and hence $\hat{D}\geq \lceil 2D/3\rceil$.
%\end{theorem}
%\lceil 2D/3 \rceil = \lceil 2(3h+z)/3 \rceil = 2h+\lceil 2z/3\rceil = 2h+z. this would be true if we could get z=2 also.

The running time in Theorem~\ref{T-better-approx} however is $\tilde{\Theta}(n^2)$ for sparse graphs. We hence investigate whether one can get an estimate $\lceil 2D/3\rceil \leq \hat{D}\leq D$ in $O(m^{2-\eps})$ time. We show:

\begin{theorem}\label{thm:sparsebetter}
There is an $\tilde{O}(m^{2-1/(2h+3)})$ time deterministic algorithm that computes an estimate $\hat{D}$ with $\lceil 2D/3\rceil \leq \hat{D}\leq D$ for all $m$-edge unweighted graphs of diameter $D=3h+z$ with $h\geq 0$ and $z\in \{0,1,2\}$. In particular, $\hat{D}\geq 2h+z$.
\end{theorem}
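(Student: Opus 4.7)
The plan is to extend the $3/2$-approximation template of Theorem~\ref{thm:improved} by using $h$-balls $B_h(x) = \{y : d(x,y) \leq h\}$ as the hitting-set objects instead of $\sqrt{n}$-nearest neighborhoods. The driving observation is that for a diameter pair $(u,v)$ with $D = d(u,v) = 3h+z$, BFS from any vertex $y$ with $d(y,u) \leq h$ produces $\mathrm{ecc}(y) \geq d(y,v) \geq D - h = 2h+z$; hence it suffices to guarantee that some BFS source lies inside $B_h(u)$ (or, symmetrically, $B_h(v)$). I will fix a parameter $t = \tilde{\Theta}(n \cdot m^{-(2h+2)/(2h+3)})$, chosen so that a hitting set of size $n/t$ contributes $\tilde{O}(nm/t) = \tilde{O}(m^{2-1/(2h+3)})$ worth of BFS work.

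The algorithm proceeds as follows. First, BFS from an arbitrary $w$, let $s$ be a farthest vertex, and BFS from $s$; if $\mathrm{ecc}(s) \geq 2h+z$, return it. Otherwise, deterministically construct via greedy set cover a hitting set $H$ of size $\tilde{O}(n/t)$ for the family $\{B_h(x) : |B_h(x)| \geq t\}$, and BFS from every vertex of $H$. Finally, for every vertex $x$ with $|B_h(x)| < t$ that is a plausible diameter endpoint, BFS from $x$ as well. Here ``plausible'' means that $x$ lies in the range of BFS layers of $s$ dictated by the bad-case constraint: once $\mathrm{ecc}(s) \leq 2h+z-1$ and $d(s,u) + d(s,v) \geq D$ force both $d(s,u), d(s,v) \in [h+1,\, 2h+z-1]$, only vertices in those layers of $s$ can be diameter endpoints. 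The output is the largest distance observed across all BFS calls.

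Correctness splits on $|B_h(u)|$ for a diameter endpoint $u$: if $|B_h(u)| \geq t$ then $H$ intersects $B_h(u)$ at some $y$, and BFS from $y$ gives $\mathrm{ecc}(y) \geq D - h = 2h+z$; if $|B_h(u)| < t$ then $u$ itself is a plausible endpoint and BFS from $u$ yields $\mathrm{ecc}(u) \geq D$. The running time of the first two steps is $O(m) + \tilde{O}(nm/t) = \tilde{O}(m^{2-1/(2h+3)})$, and the greedy hitting-set construction fits in the same budget by a standard analysis; truncated BFS to depth $h$ from each vertex, halting when $t$ vertices have been collected, produces the information needed to build $H$ in time $\tilde{O}(nt)$, well within the budget.

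The main obstacle is bounding the cost of the third step: a naive ``BFS from every small-ball vertex in the candidate layers'' can reach $\Theta(n)$ BFS calls and cost $\Theta(nm)$ in the worst case. Overcoming this requires exploiting the local sparsity around small-ball vertices---for instance by recursively applying a second, smaller hitting set restricted to the set $S$ of small-ball vertices lying in the candidate layers, or by amortizing the BFS cost against edges already processed in the hitting-set step. Carrying out this amortization deterministically, together with the greedy derandomization of the main hitting set, is the central technical point of the proof.
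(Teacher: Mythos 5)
There is a genuine gap, and you identify it yourself: the cost of the third step (a full BFS from every ``plausible'' small-ball vertex) is not bounded. Restricting candidates to the BFS layers $[h+1,2h+z-1]$ of $s$ does not limit their number --- in the worst case $\Theta(n)$ vertices $x$ with $|B_h(x)|<t$ lie in those layers, and each BFS costs $O(m)$, so the step is $\Theta(nm)$; the ``recursive second hitting set'' and ``amortization against already-processed edges'' are only gestured at, and it is precisely this unhit case that carries the whole difficulty (the large-ball half of your argument is the easy half and, on its own, only reproduces a $3/2$-type guarantee). So as written the proposal does not establish the theorem.

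The paper closes exactly this hole by a different mechanism, which is worth internalizing. Instead of hitting $h$-balls of unknown diameter endpoints, it takes $H$ to be the set of vertices of (out)degree at least $\Delta$, so $|H|\leq m/\Delta$, and does BFS from all of $H$ in $O(m^2/\Delta)$ time; if the endpoint $a$ has $d(a,H)\leq h$ this already gives an estimate $\geq 2h+z$. Otherwise it needs local exploration around only \emph{one} vertex: $w$, the vertex farthest from $H$ (found with a single BFS through an auxiliary root). Since every vertex at distance less than $d(w,H)$ from $w$ has degree below $\Delta$, the ball $BFS^{\OUT}(w,h')$ with $h'=\min\{\tilde h+1,d(w,H)\}$ has at most $\Delta^{h'}\leq\Delta^{\tilde h+1}$ vertices, and reverse BFS from each of them costs $O(m\Delta^{\tilde h+1})$; going to depth $h+1$ rather than $h$ is what upgrades the bound from $2h+1$ to $2h+2$ in the $z=2$ case, via a vertex $w'$ at distance exactly $h+1$ from $w$ on the path to $b$ with $d(a,w')\geq 2h+2$. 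A preliminary $2$-approximation supplies $\tilde h\leq 2h+1$, and $\Delta=m^{1/(2h+3)}$ balances the two terms to $O(m^{2-1/(2h+3)})$, all deterministically. The two ideas your proposal is missing are thus: (i) define the ``hitting'' set by a degree threshold so that the uncovered region is locally sparse in a quantifiable, geometric way ($\Delta^{h+1}$ ball growth), and (ii) reduce the uncovered case to a single extremal vertex (farthest from $H$) rather than to all small-ball candidate endpoints, which is what makes the budget work without any randomization or amortization.
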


\paragraph{Notation.}
Let $G=(V,E)$ denote a graph. It can be directed or undirected; this will be specified in each context.
If the graph is weighted, then there is a function on the edges $w:~E\rightarrow \mathbb{Q}^{+}\cup \{0\}$. Unless explicitly specified, the graphs we consider are unweighted.

For any $u,v \in V$, let $d(u,v)$ denote the distance from $u$ to $v$ in $G$.
Let $BFS^{\IN}(v)$ and $BFS^{\OUT}(v)$ be the incoming and outgoing breadth-first search (BFS) trees of $v$, respectively, that is the BFS trees in $G$ starting at $v$ and in $G$ with the edges reversed starting at $v$.
%, that is, all shortest paths in the tree ending (starting) at $v$.
Let $d^{\IN}(v)$ be the depth of $BFS^{\IN}(v)$, i.e. the largest distance from a vertex of $BFS^{\IN}(v)$ to $v$. Similarly, let $d^{\OUT}(v)$ be the depth of $BFS^{\OUT}(v)$.

For $h\leq d^{\IN}(v)$, let $BFS^{\IN}(v,h)$ be the vertices in the first $h$ levels of $BFS^{\IN}(v)$. Similarly, for $h\leq d^{\OUT}(v)$, let $BFS^{\OUT}(v,h)$ be the vertices in the first $h$ levels of $BFS^{\OUT}(v)$.

Let $N_s^{\IN}(v)$ ($N_s^{\OUT}(v)$) be the set of the $s$ closest incoming (outgoing) vertices of $v$, where ties are broken by taking the vertex with the smaller id. We assume throughout the paper that for each $v$ and each $s\leq n$, $|N_s^{\IN}(v)|=|N_s^{\OUT}(v)|=s,$ as otherwise the diameter of the graph would be $\infty$, and this can be checked with two BFS runs from and to an arbitrary node.

Let $d_s^{\IN}(v)$ be the largest distance from a vertex of $N_s^{\IN}(v)$ to $v$, and
$d_s^{\OUT}(v)$ be the largest distance from $v$ to a vertex of $N_s^{\OUT}(v)$.
Let $d_s^{\IN} = \max_{v\in V} d_s^{\IN}(v)$ and $d_s^{\OUT} = \max_{v\in V} d_s^{\OUT}(v)$.

For a set $S\subseteq V$ and a vertex $v\in V$ we define $p_S(v)$ to be a vertex of $S$ such that $d(v,p_S(v))\leq d(v,w)$ for every $w \in S$, i.e. the closest vertex of $S$ to $v$.

For a degree $\Delta$ we define $p_\Delta(v)$ to be the closest vertex to $v$ of degree at least $\Delta$, that is, $d(v,p_\Delta(v))\leq d(v,w)$ for every $w \in V$ of degree at least $\Delta$.

We use the following standard notation for running times. For a function of $n$, $f(n)$, $\tilde{O}(f(n))$ denotes $O(f(n)\poly\log n)$ and $O^{*}(f(n))$ denotes $O(f(n)\poly~ n)$.

%\paragraph{Paper organization.}
%The rest of the paper is organized as follows. In Section~\ref{s-lowerbound} we present our hardness result. In Section~\ref{s-aingworth} we present the algorithm of Aingworth \etal~\cite{aingworth}. In Section~\ref{s-improve-runtime} we present our faster algorithm. In Section~\ref{s-improve-approx} we present our algorithms with the better approximation. We end in Section~\ref{s-conc} with concluding remarks and open problems.
%
\section{Diameter approximation and the Strong Exponential Time Hypothesis}\label{s-lowerbound}

Impagliazzo, Paturi, and Zane~\cite{ipz1,ipz2} introduced the Exponential Time
Hypothesis (ETH) and its stronger variant, the Strong Exponential Time
Hypothesis (SETH). These two complexity hypotheses assume lower bounds on
how fast satisfiability problems can be solved. They have frequently been
used as a basis for conditional lower bounds for other concrete computational
problems.

%R. Impagliazzo and R. Paturi, On the complexity of k-sat, J. Comput.
%Syst. Sci., 62 (2001), pp. 367–375.
%R. Impagliazzo, R. Paturi, and F. Zane, Which problems have strongly
%exponential complexity?, Journal of Computer and System Sciences, 63 (2001),
%pp. 512–530.

\begin{hyp}[\cite{ipz1,ipz2}]
{\bf ETH}: There exists a real constant $\delta>0$ such that $3$-SAT instances on $n$ variables and $m$ clauses cannot be solved in $2^{\delta n}\poly(m,n)$ time.\end{hyp}

A natural question is how fast can one solve $r$-SAT as $r$ grows.
Impagliazzo, Paturi, and Zane define $$s_r=\inf \{\delta~|~\exists~ O^{*}(2^{\delta n}) \textrm{ time algorithm solving } r\textrm{-SAT instances with } n \textrm{ variables}\}, \textrm{ and } s_\infty = \lim_{r\rightarrow \infty} s_r.$$
Clearly $s_r\leq s_{r+1}$ so that the sequence is nondecreasing. Impagliazzo, Paturi, and Zane show that if ETH holds, then $s_r$ also increases infinitely often. Furthermore, all known algorithms for $r$-SAT nowadays take time $O(2^{n(1-c/r)})$ for some constant $c$ independent of $n$ and $r$ (e.g.~\cite{hirschsat,moniensat,PaturiPZ99,PPSZ05,Sch92sat,Scho99sat}).
Because of this, it seems plausible that $s_\infty=1$, and this is exactly the strong exponential time hypothesis.

\begin{hyp}[\cite{ipz1,ipz2}]
{\bf SETH}: $s_\infty=1$.
\end{hyp}

One immediate consequence of SETH is that CNF-SAT on $n$ variables cannot be solved in $2^{n(1-\eps)}\poly(n)$ time for any $\eps>0$.
%This consequence is believable since t
The best known algorithm for CNF-SAT is the $O^{*}(2^n)$ time  exhaustive search algorithm which tries all possible $2^n$ assignments to the variables, and it has been a major open problem to obtain an improvement.
Cygan et al.~\cite{cygan} showed that SETH is also equivalent to the assumption that several other NP-hard problems cannot be solved faster than by exhaustive search, and the best algorithms for these problems are the exhaustive search ones.
%This also gives some credibility to SETH.
%
%On Problems as Hard as CNFSAT
%Marek Cygan (University of Lugano), Holger Dell (University of Wisconsin-Madison), Daniel Lokshtanov (University of California, San Diego), Daniel Marx (Hungarian Academy of Sciences), Jesper Nederlof (Utrecht University), Yoshio Okamoto, (Japan Advanced Institute of Science and Technology), Ramamohan Paturi (University of California, San Diego), Saket Saurabh (Institute of Mathematical Sciences, India), Magnus Wahlstrom (Max-Planck-Institut fur Informatik)

Assuming SETH, one can prove tight conditional lower bounds on the complexity of some problems in P as well. The problem that we will look at is $k$-dominating set for constant $k$: given an undirected graph $G=(V,E)$, is there a set $S$ of $k$ vertices so that every vertex $v\in V$ is either in $S$ or has an edge to some vertex in $S$?
The best known algorithm for $k$-dominating set for $k\geq 7$ runs in $n^{k+o(1)}$ time and uses rectangular matrix multiplication~\cite{PW10}. P\v{a}tra\c{s}cu and Williams~\cite{PW10} showed that improving on this runtime may be hard as it would imply faster algorithms for CNF-SAT.

\begin{theorem}[\cite{PW10}]
Suppose there is a $k \geq 3$ and function $f$ such that $k$-Dominating Set in an $N$-node graph is
in $O(N^{f(k)})$ time. Then CNF-SAT on $n$ variables and $m$ clauses is in $O^{*}( (m + k 2^{n/k})^{f(k)} )$ time.
\end{theorem}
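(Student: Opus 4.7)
The plan is to design a polynomial-time reduction from CNF-SAT on $n$ variables and $m$ clauses to $k$-Dominating Set on a graph $G$ with $N = O(m + k\cdot 2^{n/k})$ vertices. Given the assumed $O(N^{f(k)})$ algorithm, this immediately yields a CNF-SAT algorithm running in $O^{*}((m + k\cdot 2^{n/k})^{f(k)})$ time, as required.

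The construction I would use is as follows. Partition the $n$ variables into $k$ blocks of $n/k$ variables each. For every block $i\in[k]$, introduce a vertex set $A_i$ of $2^{n/k}$ vertices, one per possible partial assignment to block $i$'s variables, and turn $A_i$ into a clique. Add an ``anchor'' vertex $v_i$ whose only neighbours are the vertices of $A_i$. Then create one vertex for each clause of $\varphi$, and connect a partial-assignment vertex $\alpha\in A_i$ to a clause vertex $C$ exactly when $\alpha$ already satisfies $C$ via some literal on block-$i$ variables. The total number of vertices is $N = m + k + k\cdot 2^{n/k}$, and the graph can be built in $\poly(N)$ time.

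I would verify the equivalence in both directions. If $x$ satisfies $\varphi$ and $\alpha_i$ is its restriction to block $i$, then $S=\{\alpha_1,\dots,\alpha_k\}$ dominates every clique $A_j$, every anchor $v_j$ (since $\alpha_j$ is a neighbour of $v_j$), and every clause vertex (satisfied by at least one $\alpha_i$). Conversely, given any $k$-dominating set $S$, the closed neighbourhoods $\{v_i\}\cup A_i$ for $i\in[k]$ are pairwise disjoint and each must be hit by $S$, so $S$ contains exactly one element from each. Any $v_i\in S$ can be swapped for an arbitrary $\alpha_i\in A_i$ without breaking domination, since $v_i$'s only neighbours lie in $A_i$ and the swap still covers both $A_i$ and $v_i$. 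After normalising, $S=\{\alpha_1,\dots,\alpha_k\}$ with $\alpha_i\in A_i$, and every clause vertex has a neighbour in $S$, i.e.\ is satisfied by some $\alpha_i$; hence the concatenated assignment satisfies $\varphi$.

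The main obstacle is setting up the gadget so that the backward direction is clean: the clique-plus-anchor structure has to force exactly one vertex of $S$ per block, while the bipartite edges between partial assignments and clauses carry the SAT-encoding. The swap argument is the essential observation that lets us normalise a stray $v_i$ out of $S$ without losing domination, and it is what allows us to conclude that the remaining $k$ partial-assignment vertices encode a consistent global satisfying assignment.
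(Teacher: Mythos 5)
Your proposal is correct, and it is essentially the argument of P\v{a}tra\c{s}cu and Williams \cite{PW10}, which the paper only cites without reproducing: partition the variables into $k$ blocks, build a clique of $2^{n/k}$ partial-assignment vertices per block, add clause vertices joined to the partial assignments that satisfy them, and force one pick per block via a per-block gadget (your anchor vertex and swap normalization), so that a $k$-dominating set exists iff the formula is satisfiable, giving the stated $O^{*}((m+k2^{n/k})^{f(k)})$ bound. The only point worth noting is the implicit assumption that building the (dense) graph, which takes $\tilde{O}(N^2)$ time, is absorbed by the $O^{*}(N^{f(k)})$ term --- harmless here since the theorem is applied with $f(k)=k-\varepsilon$ and $k\geq 3$.
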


If $f(k)=k-\varepsilon$ for some constant $\varepsilon>0$, then the above implies that SETH is false.

%The best algorithm for CNF-SAT on $n$ variables and $m$ clauses runs in $2^{n-n/O(\log (m/n)) \poly m}$ time (Calabro, Paturi, and Impagliazzo and observed by Dantsin and Hirsch).
%To improve on this, we need $f(k)/k < 1-1/(c\log m/n)$ for all values of $m$.
%If $f(k)=k-\varepsilon$, then we need
%$\varepsilon/k > 1/(c\log m/n)$ for all values of $m$. In particular $\varepsilon > k/(c' n)$ is sufficient. Since the number of nodes $N$ in the reduction is roughly $2^{n/k}$, $k/n$ is roughly $1/\log N$ and so $\varepsilon > 1/(c''\log N)$ is what we need.

%Chris Calabro, Russell Impagliazzo, and Ramamohan Paturi. A Duality between ClauseWidth
%nd Clause Density for SAT. In Proc. IEEE Conference on Computational Complexity, 252–
%260, 2006
%Evgeny Dantsin and Edward A. Hirsch. Worst-Case Upper Bounds. In Handbook of Satisfiability, Armin Biere, Marijn Heule, Hans van Maaren and Toby Walsch (eds.), 341–362, 2008.

We show a strong relationship between the diameter problem in undirected unweighted graphs and $k$-dominating set.

\begin{theorem}Suppose one can distinguish between diameter $2$ and $3$ in an $m$-edge undirected unweighted graph in time $O(m^{2-\varepsilon})$ for some constant $\varepsilon>0$.
Then
%there is a constant $\delta>0$ such that
 for all integers $k\geq 2/\varepsilon$, $2k$-dominating set can be solved in $O^{*}(n^{2k-\varepsilon})$ time.
Moreover, CNF-SAT on $n$ variables and $m$ clauses is in $O^{*}(2^{n(1-\varepsilon^2/4)})$ time, and SETH is false.\label{thm:hard2}
\end{theorem}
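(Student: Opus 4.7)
The plan is to reduce the decision version of $2k$-dominating set on an $n$-vertex input $G=(V,E)$ to distinguishing diameter $2$ from diameter $3$ in an undirected unweighted auxiliary graph $G'=(V',E')$, for any integer $k\ge 2/\varepsilon$. First I would run an $O(n^{k+1})$-time preprocessing pass over all $k$-subsets of $V$ and immediately answer ``YES'' if any one of them already dominates $V$ in $G$; otherwise I proceed to build $G'$. Take $V' = V\cup L\cup R\cup\{l,r\}$, where $L$ and $R$ are two disjoint copies of $\binom{V}{k}$, each of size $\Theta(n^k)$. The edges of $G'$ are: $l$ is adjacent to every vertex of $L\cup V$; $r$ is adjacent to every vertex of $R\cup V$; and for every $v\in V$ and every $T\in L\cup R$ there is an edge $\{v,T\}$ precisely when $T$ does \emph{not} dominate $v$ in $G$. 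In particular $L,R,V$ are internally independent and there are no direct $L$--$R$ edges, so a routine count gives $|E'| = O(n^{k+1})$.

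The key correctness claim is that, after the preprocessing, $G'$ has diameter at most $2$ iff $G$ has no $2k$-dominating set. A short case analysis dispatches every pair of $V'$ other than those in $L\times R$: pairs internal to $L$ route through $l$, internal to $R$ through $r$, and internal to $V$ through either apex; a pair $(v,T)$ with $T\in L$ is at distance $\le 2$ either by a direct edge (when $T$ does not dominate $v$) or through $l$ (when it does), and symmetrically for $R$; and a pair involving $l$ or $r$ is handled via any vertex of $V$ not dominated by the opposite-side $k$-subset, which exists thanks to the preprocessing. The only nontrivial case is $(T_L,T_R)\in L\times R$: there is no direct edge, and the only way to form a length-$2$ path is through a common neighbor $v\in V$ with $v\notin N_G[T_L]\cup N_G[T_R] = N_G[T_L\cup T_R]$. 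Hence $d_{G'}(T_L,T_R)\le 2$ iff $T_L\cup T_R$ is not a $2k$-dominating set of $G$, and in the complementary case the path $T_L$--$l$--$v$--$T_R$ (for any $v\notin N_G[T_R]$) shows that $d_{G'}(T_L,T_R) = 3$ exactly. Thus the diameter of $G'$ is $2$ or $3$ according to whether a $2k$-dominating set exists.

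Feeding $G'$ to the hypothesized $O(m^{2-\varepsilon})$-time distinguisher costs $O\bigl(n^{(k+1)(2-\varepsilon)}\bigr)$ time; since $(k+1)(2-\varepsilon) = 2k+2-(k+1)\varepsilon \le 2k-\varepsilon$ whenever $k\ge 2/\varepsilon$, this gives the claimed $O^*(n^{2k-\varepsilon})$ algorithm for $2k$-dominating set. Plugging this bound (with $f(2k) = 2k-\varepsilon$) and the choice $k = 2/\varepsilon$ (an integer without loss of generality, by shrinking $\varepsilon$ infinitesimally) into the P\v{a}tra\c{s}cu--Williams theorem recalled earlier yields a CNF-SAT algorithm running in $O^*\bigl((m + 2k\cdot 2^{n/(2k)})^{2k-\varepsilon}\bigr) = O^*\bigl(2^{n(1-\varepsilon^2/4)}\bigr)$ time, contradicting SETH.

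The main obstacle I anticipate is the distance analysis for the cross pairs involving the apex vertices $l$ and $r$: one must rule out unintended length-$2$ shortcuts between $T_L$ and $T_R$ through them, and simultaneously guarantee that the apex-to-opposite pairs $(l,T_R)$ and $(r,T_L)$ do not by themselves inflate the diameter to $3$ on ``no'' instances. This is precisely the job of the cheap $O(n^{k+1})$ preprocessing step, which removes the degenerate case of a single $k$-subset dominating all of $V$; without it, the characterization breaks.
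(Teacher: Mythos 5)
Your proposal is correct and follows essentially the same reduction as the paper: nodes for $k$-subsets joined to the vertices they fail to dominate, with a $2$-vs-$3$ diameter gap encoding the existence of a $2k$-dominating set, the same $O(n^{k+1})$ edge bound, the same degenerate-case check that no single $k$-subset dominates, and the same choice $k\geq 2/\varepsilon$ before invoking the P\v{a}tra\c{s}cu--Williams theorem. The only difference is cosmetic: you use two copies $L,R$ of the $k$-subsets plus apex vertices $l,r$ where the paper uses one copy and makes $V$ a clique; both gadgets yield the identical distance analysis.
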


Theorem~\ref{thm:hard2} immediately implies Theorem~\ref{thm:hard} in the introduction, as any $(3/2-\eps)$-approximation algorithm can distinguish between diameter $2$ and $3$.
%Note also that the proof shows a stronger thing: that diameter can solve dominating set!
%*** consequences of k-dominating set having a good algorithm?

%if k-dom set has an $n^{o(k)}$ time algorithm, then FPT=W[1] (Chen et al 04).
%useless since the algorithm requires atleast linear time.

%*** consequences of CNF-SAT improved alg:
%%The best algorithm for CNF-SAT on $n$ variables and $m$ clauses runs in $2^{n-n/O(\log (m/n))} \poly m$ time (Calabro, Paturi, and Impagliazzo and observed by Dantsin and Hirsch).
%%We would improve on this if $\eps^2>> 1/O(\log (m/n))$ for all values of $m$.

%To improve on this, we need $f(k)/k < 1-1/(c\log m/n)$ for all values of $m$.
%If $f(k)=k-\varepsilon$, then we need
%$\varepsilon/k > 1/(c\log m/n)$ for all values of $m$. In particular $\varepsilon > k/(c' n)$ is sufficient. Since the number of nodes $N$ in the reduction is roughly $2^{n/k}$, $k/n$ is roughly $1/\log N$ and so $\varepsilon > 1/(c''\log N)$ is what we need.

\begin{proof}
%Assume that there is such an algorithm for some constant $\varepsilon>0$.
%
Given an instance $G=(V,E)$ of $2k$-Dominating set for constant $k$, we construct an instance of the $2$ vs $3$ diameter problem and we show that $2k$-Dominating set in $n$-node graphs can be solved in $O^{*}(n^{2k-\delta})$ time for some constant $\delta>0$ depending on $\varepsilon$.

Take all $k$-subsets of the vertices in $V$ and add a node for each of them to the $2$ vs $3$ instance $G'$.
Add a node for every vertex in $V$ -- call this set of nodes $V'$ and make $V'$ into a clique.

For every $k$-subset $S$ of vertices of $V$, connect $S$ to $v\in V'$ in $G'$ iff $S$ \emph{does not dominate} $v$ in $G$.
While we do this we check whether each $S$ is a $k$-dominating set in $G$, and if so, we stop. From now on we can assume that none of the $k$-subsets $S$ are dominating sets in $G$.

Now, notice that if $S$ and $T$ are two $k$-subsets so that their union is not a $(\leq 2k)$-dominating set in $G$, then the distance in $G'$ between $S$ and $T$ is $2$: there is some $u$ that is dominated by neither $S$ nor $T$ and so $S-u-T$ is a path of length $2$.
If, on the other hand, $S\cup T$ is a dominating set in $G$, then there is no such path and the shortest path between $S$ and $T$ in $G'$ is to go from $S$ to some $v$ that $S$ doesn't dominate, then to some $u$ that $T$ doesn't dominate ($V'$ is a clique) and then from $u$ to $T$.
%(The nodes $u$ and $v$ exist since otherwise $S$ or $T$ is a $k$-dominating set in $G$.)

The distance between any $u$ and $v$ in $V'$ is $1$, and the distance between any $u$ and any $S$ is at most $2$:
%(unless $S$ is a $k$-dominating set and would have been detected):
go from $u$ to some node $v$ that $S$ doesn't dominate and then to $S$.

Hence, if there is no $2k$-dominating set in $G$, then the diameter of $G'$ is $2$, and if there is one, then the diameter of $G'$ is $3$.
$G'$ has ${n\choose k} + n$ nodes and at most $O(n\cdot {n\choose k}) \leq O(n^{k+1})$ edges.

Since we can solve the diameter problem in $O(m^{2-\varepsilon})$ time, applying that algorithm to $G'$ solves $2k$-dominating set in $G$ for any $k\geq 2$ in time
$O(n^{2k+2-\varepsilon k -\varepsilon})$.

We want this to be $O(n^{2k-\delta})$ for some $\delta>0$, so it suffices to pick $k$ so that $-\delta\geq 2-\varepsilon (k+1)$.
If we want $\delta=\varepsilon$, then $k\geq 2/\varepsilon$ suffices.
%That is, we want to set $k>2/\varepsilon-1$, so setting $k=\lceil 2/\varepsilon\rceil$ suffices. Then $\delta=\varepsilon$.
\end{proof}

\section{The algorithm of Aingworth \etal}\label{s-aingworth}

In this section we revisit the algorithm of
Aingworth, Chekuri, Indyk and Motwani~\cite{aingworth},
that computes a $3/2$-approximation of the diameter of a directed (or undirected) graph in $\Ot(m\sqrt n + n^2)$ time.
(The algorithm can also be made to work for graphs with nonnegative weights with roughly the same running time and approximation factor. In this section we only focus on the algorithm for unweighted graphs.)
% We will revisit the case of weighted graphs later on.)

Let $s$ be a given parameter in $[1,n]$.
The algorithm works as follows.
First, it computes $N_s^\OUT(v)$ for every $v\in V$. Then, for a vertex $w$, where $d_s^{\OUT}(w) = d_s^{\OUT}$ it computes
$BFS^{\OUT}(w)$ and for every $u\in N_s^\OUT(w)$ it computes $BFS^{\IN}(u)$.
Next, it computes a set $S$ that hits $N_s^\OUT(v)$ for every $v\in V$ and for every $u\in S$ it computes $BFS^{\OUT}(u)$.
As an estimate, the algorithm returns the depth of the deepest computed BFS tree.

%In the next Lemma we analyze the running time of the algorithm.
The next lemma appears in~\cite{aingworth}. We state it for completeness. 
%Aingworth et al. set $s=\sqrt{n}$ and obtain their running time.
\begin{lemma}\label{L-Aing-runtime}
The running time of the algorithm is $\Ot(ns^2+(n/s+s)m)$.
\end{lemma}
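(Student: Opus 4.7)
The plan is to bound the cost of each of the three main phases of the algorithm separately and sum the contributions.

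Phase 1 computes $N_s^\OUT(v)$ for every $v \in V$. I would run a truncated outgoing BFS from each source $v$ that halts as soon as $s$ distinct vertices have been reached. With a careful implementation that only examines $O(s)$ edges per dequeued vertex (for instance, stopping the edge-scan at each dequeued vertex once enough new neighbors have been added, since any further neighbor at the next BFS layer is redundant once the queue already contains $s$ candidates), the work per source is $O(s^2)$, for a total of $\tilde O(n s^2)$.

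Phase 2 performs one outgoing BFS from $w$ and $|N_s^\OUT(w)| = s$ incoming BFS runs from the vertices of $N_s^\OUT(w)$. Each BFS takes $O(m)$ time, so the total cost of this phase is $O(sm)$.

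Phase 3 constructs a hitting set $S$ for the family $\{N_s^\OUT(v)\}_{v \in V}$ and runs an outgoing BFS from each vertex of $S$. The hitting set can be built by a standard random-sampling argument: a uniformly random set of $\Theta((n/s) \log n)$ vertices hits every fixed set of size $s$ with probability at least $1 - n^{-2}$, so a union bound over the $n$ sets $N_s^\OUT(v)$ gives $|S| = \tilde O(n/s)$ with high probability in $\tilde O(n)$ time (alternatively one can use the greedy set-cover construction on the $n$ sets of size $s$). Running BFS from each of the $|S|$ vertices then costs $\tilde O((n/s)\,m)$ in total. Adding the three contributions yields $\tilde O(ns^2 + sm + (n/s)m) = \tilde O(ns^2 + (n/s + s)m)$, as claimed. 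The main subtlety is Phase 1: a naive truncated BFS from each source could sum to as much as $\sum_v \sum_{u \in N_s^\OUT(v)} \deg^\OUT(u) = \Omega(nm)$ in the worst case, since a single high-degree vertex may lie in many small neighborhoods; achieving the stated $\tilde O(ns^2)$ bound requires charging only $O(s)$ edge scans per dequeued vertex rather than its full out-degree.
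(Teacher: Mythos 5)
Your proof is correct and matches the standard analysis of the Aingworth~\etal algorithm; the paper itself states this lemma without proof (deferring to the original reference), and your three-phase accounting --- $O(s^2)$ per source for the truncated BFS because each dequeued vertex wastes at most $s$ scans on the (at most $s$) already-discovered vertices, $O(sm)$ for the BFS runs from $w$ and $N_s^{\OUT}(w)$, and $\Ot((n/s)m)$ for the hitting-set BFS runs --- is exactly the intended argument, including the genuine subtlety you flag about not charging full out-degrees in Phase~1. One small remark: since the lemma concerns the deterministic algorithm of Aingworth~\etal, the hitting set should be taken from the greedy set-cover construction you mention parenthetically (its $\Ot(ns)$ cost is absorbed by the $ns^2$ term); the random-sampling alternative is precisely the modification this paper introduces later to eliminate the $ns^2$ term.
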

%\begin{proof}
%The cost of computing $N_s^\OUT(v)$ for every $v\in V$ is $O(ns^2)$. A hitting set $S$ of size $\Ot(n/s)$ can be computed deterministically in $O(m+ns)$ time. Finally, the time for computing the BFS trees for the vertices of $S \cup N_s^\OUT(w)$ is $\Ot(m(n/s+s))$.
%\end{proof}

Aingworth et al. set $s=\sqrt{n}$ and obtain their running time. We note that if one sets $s= m^{1/3}$ instead, one can get a runtime of $\Ot(m^{2/3}n)$ that is better for sparse graphs; we later show that both of these runtimes can be improved with randomization.

We now analyze the quality of the estimate returned by the algorithm. Aingworth~\etal\cite{aingworth} proved that this estimate %returned by their algorithm for a graph of diameter $D$
is at least $\lfloor 2D/3 \rfloor$ in graphs of diameter $D$. Here we present a tighter analysis.

\begin{lemma}\label{L-Aing-approx}
Let $G=(V,E)$ be a directed graph with diameter $D=3h+z$, where $h\geq 0$ and $z\in \{0,1,2\}$.
Let $\hat{D}$ be the estimate returned by the algorithm. For $z\in \{0,1\}$, we have $2h+z \leq \hat{D} \leq D$. For $z=2$, we have that $2h+1 \leq \hat{D} \leq D$.
\end{lemma}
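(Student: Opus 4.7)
The proof plan is a refined version of the Aingworth \etal\ analysis, aimed at extracting the exact integer bound rather than $\lfloor 2D/3\rfloor$. Set $T:=2h+z$ when $z\in\{0,1\}$ and $T:=2h+1$ when $z=2$; the goal is $\hat{D}\ge T$. If $d_s^{\OUT}\ge T$, then $\hat{D}\ge d^{\OUT}(w)\ge d_s^{\OUT}(w)=d_s^{\OUT}\ge T$ already, so from now on I assume $d_s^{\OUT}\le T-1$ and fix a shortest diameter path $x_0\to x_1\to\cdots\to x_D$.

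I extract two lower bounds on $\hat{D}$. First, the hitting-set bound: since $S$ hits $N_s^{\OUT}(x_0)$, there is $u\in S$ with $d(x_0,u)\le d_s^{\OUT}$, and the computed $BFS^{\OUT}(u)$ gives $\hat{D}\ge d(u,x_D)\ge D-d_s^{\OUT}$. Second, I check whether some $x_k$ with $k\ge T$ lies in $N_s^{\OUT}(w)$: if yes, then $\hat{D}\ge d^{\IN}(x_k)\ge d(x_0,x_k)=k\ge T$ and we are done. Otherwise $x_D\notin N_s^{\OUT}(w)$, so $d(w,x_D)\ge d_s^{\OUT}$; and if in addition $d(w,x_D)\ge T$, then $\hat{D}\ge d^{\OUT}(w)\ge d(w,x_D)\ge T$. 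The only remaining sub-subcase is $d_s^{\OUT}\le d(w,x_D)\le T-1$.

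Here the key pivot trick: because $|BFS^{\OUT}(w,d_s^{\OUT}-1)|\le s-1$, any vertex at distance at most $d_s^{\OUT}-1$ from $w$ automatically belongs to $N_s^{\OUT}(w)$ regardless of how ties are broken. Taking $y$ to be the vertex on a shortest $w\to x_D$ path at distance exactly $d_s^{\OUT}-1$ from $w$ (which exists since $d(w,x_D)\ge d_s^{\OUT}$), I obtain $y\in N_s^{\OUT}(w)$ and $d(y,x_D)=d(w,x_D)-(d_s^{\OUT}-1)$, so $BFS^{\IN}(y)$ yields
\[
\hat{D}\;\ge\;d(x_0,y)\;\ge\;D-d(y,x_D)\;=\;D-d(w,x_D)+d_s^{\OUT}-1\;\ge\;D-T+d_s^{\OUT}.
\]
(The degenerate case $d_s^{\OUT}=0$ forces $s=1$, in which case the algorithm runs BFS from every vertex and returns $\hat{D}=D$.)

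Combining with the hitting-set bound gives $\hat{D}\ge\max\bigl(D-T+d_s^{\OUT},\,D-d_s^{\OUT}\bigr)$, and a short arithmetic check for each pair $(D,T)\in\{(3h,2h),(3h+1,2h+1),(3h+2,2h+1)\}$ shows this maximum is at least $T$ for every integer $d_s^{\OUT}\in\{0,1,\ldots,T-1\}$, with the worst case near $d_s^{\OUT}\approx T/2$. The improvement to $2h+1$ for $z=1$ over the classical $\lfloor 2D/3\rfloor=2h$ bound is precisely the integrality gain at the crossover: at $d_s^{\OUT}\in\{h,h+1\}$ the two bounds evaluate to $\{2h,2h+1\}$ and $\{2h+1,2h\}$, whose maxima are both $2h+1$. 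I anticipate the main obstacle to be the careful treatment of the tie-breaking convention in the definition of $N_s^{\OUT}(w)$, which is exactly why I take $y$ strictly inside the ball of radius $d_s^{\OUT}-1$ rather than on its boundary at radius $d_s^{\OUT}$.
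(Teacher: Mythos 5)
Your proof is correct, and it builds on the same raw ingredients as the paper's argument (the out-BFS from $w$, the in-BFS trees from the vertices of $N_s^{\OUT}(w)$, and the out-BFS trees from the hitting set $S$), but the decomposition is genuinely different. The paper thresholds at $h$: either $d_s^{\OUT}(w)\leq h$, in which case $S$ contains a vertex within distance $h$ of $a$ and some out-BFS already has depth at least $D-h=2h+z$, or $d_s^{\OUT}(w)>h$, in which case (assuming $d^{\OUT}(w)<2h+z$) the vertex at distance exactly $h$ from $w$ on a shortest $w\to b$ path lies in $N_s^{\OUT}(w)$ and is at distance at least $2h+1$ from $a$; the bounds $2h+z$ and $2h+1$ fall out immediately. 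You instead keep $t=d_s^{\OUT}$ as a free parameter, prove the two complementary bounds $\hat{D}\geq D-t$ (hitting set) and $\hat{D}\geq D-T+t$ (pivot at depth $t-1$, after disposing of the easy sub-cases $d_s^{\OUT}\geq T$, some $x_k\in N_s^{\OUT}(w)$ with $k\geq T$, and $d(w,x_D)\geq T$), and finish by optimizing over the integer $t$; the crossover at $t\in\{h,h+1\}$ is exactly where the gain over $\lfloor 2D/3\rfloor$ appears, and I verified the arithmetic goes through for all three values of $z$ (for $z=2$ your combined branch even gives $2h+2$, though not overall, since the branch $d_s^{\OUT}\geq T$ only yields $2h+1$ --- consistent with the paper needing a separate algorithm in Section 5 to guarantee $2h+2$). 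Your route costs some extra sub-case bookkeeping but makes the integrality gain explicit. Two minor remarks: you never address the (trivial) upper bound $\hat{D}\leq D$, which is part of the lemma statement --- every candidate value is the depth of a BFS tree, hence an actual distance in $G$; and when $t=1$ your pivot is $w$ itself, so you implicitly use the convention $w\in N_s^{\OUT}(w)$ (the paper's proof relies on the same convention when $h=0$, and in any case your hitting-set bound $D-t$ already covers that sub-case).
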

\begin{proof}
Let $a,b\in V$ such that $d(a,b)=D$.
First notice that the algorithm always returns a depth of some shortest paths tree and hence  $\hat{D} \leq D$.

Now, if $d_s^{\OUT}(w)\leq h$ then also $d_s^{\OUT}(a)\leq h$ and as $S$ hits $N_s^\OUT(a)$, one of the BFS trees computed for vertices of $S$ has depth at least $2h+z$. Hence, assume that $d_s^{\OUT}(w)> h$.
We can also assume that $d^{\OUT}(w)<2h+z$ as otherwise when we compute $BFS^\OUT(w)$, we'd return a depth at least $2h+z$.

As $d^{\OUT}(w)<2h+z$, also $d(w,b)<2h+z$. Since $d_s^{\OUT}(w)> h$, we have that $BFS^{\OUT}(w,h)\subseteq N_s^\OUT(w)$. Hence there is a vertex $w'\in N_s^\OUT(w)$ on the path from $w$ to $b$ such that $d(w,w')=h$ and hence
$d(w',b)<h+z$. Since $d(a,b)=3h+z$, we must have that $d(a,w')\geq 2h+1$.
As the algorithm computes $BFS^{\IN}(u)$ for every $u\in N_s^\OUT(w)$, in particular, it computes $BFS^{\IN}(w')$, and returns an estimate $\geq 2h+1$. %Now as $d(w',b)<h+z$, we have that
For $z\in \{0,1\}$, $d(a,w')\geq 2h+1\geq 2h+z$ and hence the final estimate returned is always at least $2h+z$. For $z=2$ we only have that $d(a,w')\geq 2h+1$ and if the algorithm returns $d(a,w')$ as an estimate, it may return $2h+1$ instead of $2h+z$.
\end{proof}

%We end this section with a theorem that summarizes the properties of Aingworth~\etal algorithm.
%
%\begin{theorem}[Aingworth~\etal]\label{T-Aing}
%Let $G=(V,E)$ be a directed graph with diameter $D=3h+z$, where $h\geq 0$ and $z\in \{0,1,2\}$. In $\Ot(n^2+m\sqrt n)$ time one can compute an estimate $\hat{D}$ of $D$ such that $2h+z \leq \hat{D} \leq D$ for $z\in \{0,1\}$ and $2h+1 \leq \hat{D} \leq D$ for $z=2$.
%\end{theorem}

\section{Improving the running time}\label{s-improve-runtime}

The algorithm of Aingworth \etal~\cite{aingworth} runs in $\Ot(ns^2+(n/s+s)m)$. 
%Choosing $s$ to be $\sqrt n$, we get a running time of $\Ot(m\sqrt n + n^2)$. 
In this section we show that it is possible to get rid of the $ns^2$ term, while keeping the quality of the estimate unchanged. 
By choosing $s=\sqrt n$, we get an algorithm running in $\Ot(m\sqrt n)$ time.

The term of $ns^2$ in the running time comes from the computation of $N_s^\OUT(v)$ for every $v\in V$. This computation is done to accomplish two tasks.
One task is to obtain $d_s^{\OUT}(v)$ for every $v\in V$ and then to use it to find a vertex $w$ such that $d_s^{\OUT}(w)=d_s^{\OUT}$.
A second task is to obtain, deterministically, a hitting set $S$ of size $\Ot(n/s)$ that hits the set $N_s^\OUT(v)$ of every $v\in V$.

Our main idea is to accomplish these two tasks without explicitly computing $N_s^\OUT(v)$ for every $v\in V$. The major step in our approach is to completely modify the first task above by picking a different type of vertex to play the role of $w$. Making the second task above fast can be accomplished easily with randomization. We elaborate on this below. 

Our algorithm works as follows. First, it computes a hitting set by using randomization, that is,
%it creates a set $S$ of size $O(n/s \log n)$ by adding every $v\in V$ to $S$ with probability $\Ot(s^{-1})$.
it picks a random sample $S$ of the vertices of size $\Theta(n/s \log n)$.
This guarantees that with high probability (at least $1-n^{-c}$, for some constant $c$), $S \cap N_s^\OUT(v)\neq \emptyset$, for every $v\in V$.
This accomplishes the second task above in $\tilde{O}(n)$ time, with high probability.
 Similarly to the algorithm of  Aingworth \etal~\cite{aingworth}, our algorithm computes $BFS^{\OUT}(v)$, for every $v\in S$.

We now explain the main idea of our algorithm, i.e. how we are able replace the first task from before with a much faster step.
First, for every $v\in V$ our algorithm computes the closest node of $S$, $p_S(v)$, to $v$, by creating a new graph as follows. It adds an additional vertex $r$ with edges $(u,r)$, for every $u\in S$. It computes $BFS^{\IN}(r)$ in this graph. It is easy to see that for every $v\in V$ the last vertex before $r$ on the shortest path from $v$ to $r$ is $p_S(v)$. This step takes $O(m)$ time.

Now, as opposed to the algorithm of  Aingworth \etal that picks a vertex $w$ such that $d_s^{\OUT}(w)=d_s^{\OUT}$, our algorithm finds a vertex $w\in V$
that is furthest away from $S$: i.e. such that
$d(w,p_S(w))\geq d(u,p_S(u))$, for every $u\in V$.
The vertex $w$ plays the same role as its counterpart in~\cite{aingworth}:
Our algorithm computes $BFS^\OUT(w)$ and obtains $N_s^\OUT(w)$ from it. Finally, it computes  $BFS^\IN(u)$ for every $u\in N_s^\OUT(w)$. As an estimate, the algorithm returns the depth of the deepest BFS tree that it has computed.

In the next Lemma we analyze the running time of the algorithm.

\begin{lemma}\label{L-Improve-runtime}
The running time of the algorithm is $\Ot((n/s+s)m)$.
\end{lemma}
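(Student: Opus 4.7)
The plan is to simply add up the cost of each step of the algorithm as described in the preceding paragraphs, and verify that each of the two ``replacement'' tasks (the hitting set construction and the choice of $w$) can be carried out in time dominated by $\Ot((n/s+s)m)$, so that the overall running time matches that of Aingworth \etal minus the $ns^2$ term.

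First I would account for the hitting set. Picking $S$ uniformly at random with $|S|=\Theta((n/s)\log n)$ takes $\Ot(n)$ time, and by a standard Chernoff/union bound argument, $S\cap N_s^{\OUT}(v)\neq\emptyset$ for every $v\in V$ with high probability. I would then argue the BFS computations from $S$: running $BFS^{\OUT}(v)$ for each $v\in S$ costs $O(m)$ per vertex, so in total $\Ot((n/s)\cdot m)$. This replaces the $ns^2$ bookkeeping used in Lemma~\ref{L-Aing-runtime} for building the hitting set.

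Next I would handle the ``closest sampled vertex'' step. Adding the auxiliary vertex $r$ together with edges $(u,r)$ for $u\in S$ produces a graph with $m+|S|=O(m+n/s)$ edges, and a single BFS from $r$ in the reversed graph suffices to determine $p_S(v)$ and $d(v,p_S(v))$ for every $v\in V$ simultaneously (the parent of $v$'s last step into $r$ is exactly $p_S(v)$). That is $O(m+n)$ time. Scanning all $v\in V$ to pick $w$ maximizing $d(w,p_S(w))$ adds $O(n)$. Thus the entire ``replacement'' of the Aingworth \etal task of computing $d_s^{\OUT}(v)$ for all $v$ collapses from $\Ot(ns^2)$ down to $\Ot(m+n)$.

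Finally I would bound the remaining BFS computations: one $BFS^{\OUT}(w)$ in time $O(m)$, from which we read off $N_s^{\OUT}(w)$ by taking the $s$ closest vertices in $O(n)$ additional time, and then $BFS^{\IN}(u)$ for each of the $s$ vertices $u\in N_s^{\OUT}(w)$, contributing $O(sm)$. Summing the three dominant contributions $\Ot((n/s)m)$, $O(m+n)$, and $O(sm)$ yields $\Ot((n/s+s)m)$, as claimed. There is no real obstacle here; the only thing one has to be careful about is noting that the randomized hitting-set bound holds with high probability, so the statement is implicitly about the expected (or high-probability) running time, which matches the wording of Theorem~\ref{thm:improved}.
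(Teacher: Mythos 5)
Your proposal is correct and takes essentially the same route as the paper's proof: form $S$ in $\Ot(n)$ time, recover $p_S(v)$ for all $v$ (and hence $w$) with a single BFS through the auxiliary vertex $r$ in $O(m)$ time, and charge $\Ot((n/s+s)m)$ for the BFS trees rooted at $S\cup N_s^{\OUT}(w)$. One minor remark: the bound in Lemma~\ref{L-Improve-runtime} is a worst-case bound that holds whether or not $S$ actually hits every $N_s^{\OUT}(v)$ -- the randomness only affects the quality of the estimate, and expectation enters only in the Las Vegas conversion inside Theorem~\ref{thm:improved} -- so your closing caveat that the statement is implicitly about expected or high-probability running time is not needed.
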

\begin{proof}
A hitting set $S$ is formed in $O(n)$ time. With a single BFS computation, in $O(m)$ time, we find $p_S(v)$ for every $v\in V$, and hence also find $w$.
%Finding a vertex $w\in V$ with $d(w,p_S(w))\geq d(u,p_S(u))$, for every $u\in V$, takes $O(n)$ time.
The cost of computing a BFS tree for every $v\in S \cup N_s^\OUT(w)$ is $\Ot((n/s+s)m)$.
\end{proof}

Next, we show that the estimate produced by our algorithm is of the same quality as the estimate produced by Aingworth \etal algorithm, with high probability.

\begin{lemma}\label{L-Improve-approx}
Let $G=(V,E)$ be a directed (or undirected) graph with diameter $D=3h+z$, where $h\geq 0$ and $z\in \{0,1,2\}$.
Let $\hat{D}$ be the estimate returned by the above algorithm. With high probability, $2h+z \leq \hat{D} \leq D$ whenever
$z\in \{0,1\}$, and $2h+1 \leq \hat{D} \leq D$ whenever $z=2$.
\end{lemma}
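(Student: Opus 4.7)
\emph{Proof Plan.} The upper bound $\hat{D}\leq D$ is immediate, since every value returned by the algorithm is the depth of an actual BFS tree of $G$. For the lower bound, my plan is to follow the case analysis of Lemma~\ref{L-Aing-approx}, adapted to the two differences between our algorithm and the Aingworth et al.\ algorithm: the hitting set $S$ is now random, and the pivot $w$ is chosen to maximize $d(w, p_S(w))$ rather than $d_s^\OUT(w)$.

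I would first establish by a standard union bound that with $|S|=\Theta((n/s)\log n)$, the event $\mathcal{E}:=\{S\cap N_s^\OUT(v)\neq\emptyset \text{ for every } v\in V\}$ holds with probability at least $1-n^{-c}$, since for a fixed $v$ the miss probability is at most $(1-s/n)^{|S|}\leq e^{-|S|s/n}$. Conditioning on $\mathcal{E}$, I extract the key inequality $d(v, p_S(v))\leq d_s^\OUT(v)$ for every $v\in V$: some element of $S$ sits inside $N_s^\OUT(v)$, and $p_S(v)$ can only be at least as close.

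With this in hand I would rerun the case split of Lemma~\ref{L-Aing-approx} on a fixed diametral pair $a,b$ with $d(a,b)=D=3h+z$. If $d(w,p_S(w))\leq h$, the maximality of $w$ forces $d(a,p_S(a))\leq h$, so $u:=p_S(a)\in S$ satisfies $d(u,b)\geq 2h+z$ and $BFS^\OUT(u)$ (computed because $u\in S$) has the required depth. If $d(w,p_S(w))>h$, the key inequality promotes this to $d_s^\OUT(w)>h$, which gives $BFS^\OUT(w,h)\subseteq N_s^\OUT(w)$; I then pick $w'$ on a shortest $w$-to-$b$ path at distance $\min(h, d(w,b))$ from $w$ and use a short triangle-inequality check to see that $BFS^\IN(w')$ yields depth at least $2h+1$ when $d(w,b)>h$ and at least $D$ when $d(w,b)\leq h$. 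The subcase $d^\OUT(w)\geq 2h+z$ is handled directly by $BFS^\OUT(w)$. Combining everything gives $\hat{D}\geq 2h+z$ for $z\in\{0,1\}$ and $\hat{D}\geq 2h+1$ for $z=2$.

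The main---and only genuinely new---ingredient is the inequality $d(v,p_S(v))\leq d_s^\OUT(v)$ that holds on $\mathcal{E}$; this is precisely what lets the sampling-based pivot $w$ play the role of the deterministic $d_s^\OUT$-maximizer in Lemma~\ref{L-Aing-approx}. Once this is in place, everything else is a mechanical transcription of the earlier proof, and I do not expect any further obstacles.
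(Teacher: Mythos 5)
Your proposal is correct and follows essentially the same route as the paper's proof: the same case split on $d(w,p_S(w))\leq h$ versus $>h$, the same key observation that $S$ hitting $N_s^\OUT(w)$ gives $d_s^\OUT(w)\geq d(w,p_S(w))$ (hence $BFS^\OUT(w,h)\subseteq N_s^\OUT(w)$), and the same choice of $w'$ on a shortest $w$--$b$ path feeding an incoming BFS of depth at least $2h+1$. Your explicit treatment of the subcase $d(w,b)\leq h$ (taking $w'=b$) is a minor tightening of a step the paper leaves implicit, but it does not change the argument.
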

\begin{proof}
Let $a,b\in V$ such that $d(a,b)=D$. Let $w$ be a vertex that satisfies $d(w,p_S(w))\geq d(u,p_S(u))$, for every $u\in V$.

If $d(w,p_S(w))\leq h$ then also $d(a,p_S(a))\leq h$.
% since $d(w,p_S(w))\geq d(u,p_S(u))$, for every $u\in V$.
 As the algorithm computes $BFS^\OUT(v)$ for every $v\in S$, it follows that $BFS^\OUT(p_S(a))$ is computed as well and its depth is at least $2h+z$ as required.  Hence, assume that $d(w,p_S(w)) > h$. We can assume also that $d^{\OUT}(w)<2h+z$ since the algorithm computes $BFS^\OUT(w)$ and if $d^{\OUT}(w)\geq 2h+z$ then it computes a BFS tree of depth at least $2h+z$ as required.

Since $d^{\OUT}(w)<2h+z$ it follows that $d(w,b)<2h+z$. Moreover, since $d(w,p_S(w))>h$ and $S$ hits $N_s^\OUT(w)$ whp, we must have that $N_s^\OUT(w)$ contains a node at distance $>h$ from $w$, and hence $BFS^{\OUT}(w,h)\subseteq N_s^\OUT(w)$. This implies that there is a vertex $w'\in N_s^\OUT(w)$ on the path from $w$ to $b$ such that $d(w,w')=h$ and hence $d(w',b)<h+z$. Since $d(a,b)=3h+z$, we also have that $d(a,w')\geq 2h+1$.

The algorithm computes $BFS^{\IN}(u)$ for every $u\in N_s^\OUT(w)$, and in particular, it computes $BFS^{\IN}(w')$, thus returning an estimate at least $d(a,w')\geq 2h+1$.
Hence for $z\in \{0,1\}$ the final estimate is always $\geq 2h+z$, and for $z=2$ the estimate could be $2h+1$ but no less.
%Now as $d(w',b)<h+z$ for $z\in \{0,1\}$ it follows that $d(w',b)\leq h$ and $d(a,w')\geq 2h+z$. For $z=2$ it follows that $d(w',b)\leq h+1$ and $d(a,w')\geq 2h+1$.
\end{proof}

We now turn to prove Theorem~\ref{thm:improved} from the introduction.

\begin{reminder}{Theorem~\ref{thm:improved}}
Let $G=(V,E)$ be a directed or an undirected graph with diameter $D=3h+z$, where $h\geq 0$ and $z\in \{0,1,2\}$. In $\Ot(m \sqrt n)$ expected time one can compute an estimate $\hat{D}$ of $D$ such that $2h+z \leq \hat{D} \leq D$ for $z\in \{0,1\}$ and $2h+1 \leq \hat{D} \leq D$ for $z=2$.
\end{reminder}

%\begin{theorem}\label{T-faster}
%Let $G=(V,E)$ be a directed graph with diameter $D=3h+z$, where $h\geq 1$ and $z\in \{0,1,2\}$. In $\Ot(\sqrt n m)$ expected time it is possible to compute an estimate $\hat{D}$ of $D$ such that $2h+z \leq \hat{D} \leq D$ for $z\in \{0,1\}$ and $2h+1 \leq \hat{D} \leq D$ for $z=2$.
%\end{theorem}
\begin{proof}
From Lemma~\ref{L-Improve-runtime} we have that if we set $s=\sqrt n$ the algorithm runs in $\Ot(m\sqrt n)$ worst case time.
From Lemma~\ref{L-Improve-approx} we have that with high probability, that is $1-n^{-c}$ for some constant $c$, the algorithm returns an estimate of the desired quality. We now show how to convert the algorithm into a Las-Vegas one so that it
always returns an estimate of the desired quality but the
running time is $\Ot(m \sqrt n)$ in expectation.

Randomization is used only in order to obtain a set that hits $N_s^\OUT(v)$ for every $v\in V$.
The only place that the hitting set affects the quality of the approximation is in Lemma~\ref{L-Improve-approx} where we used the fact that, whp,
$S$ contains a node of $N_s^\OUT(w)$, so that if $d(w,S)>h$, $N_s^\OUT(w)$ contains a node at distance $>h$ from $w$.

Note that we compute $N_s^\OUT(w)$ and we can check whether $S$ intersects it in $\Ot(s)$ time. If it doesn't, we can
%
%$BFS^{\OUT}(w,h)\subseteq N_s^\OUT(w)$.
%However, we can check that this is the case while computing $N_s^\OUT(w)$ and if
%$BFS^{\OUT}(w,h)\nsubseteq N_s^\OUT(w)$ then to
rerun the algorithm until we have verified that $S\cap N_s^\OUT(w)\neq \emptyset$.
%$BFS^{\OUT}(w,h)\subseteq N_s^\OUT(w)$.
Since $S\cap N_s^\OUT(w)= \emptyset$ holds with very small probability, the expected running time of the algorithm is $\Ot(m \sqrt n)$ and its estimate is guaranteed to have the required quality.
\end{proof}

Just as in~\cite{aingworth}, we can make our algorithm work for graphs with nonnegative weights as well by replacing every use of BFS with Dijkstra's algorithm. The proofs are analogous, and the running time is increased by at most a $\log n$ factor. We obtain

\begin{reminder}{Theorem~\ref{thm:improvedw}}
%Let $G=(V,E)$ be a directed or an undirected graph with diameter $D$. In $\Ot(m \sqrt n)$ expected time one can compute an estimate $\hat{D}$ of $D$ such that $\lfloor 2D/3\rfloor\leq \hat{D} \leq D$.
Let $G=(V,E)$ be a directed or an undirected graph with nonnegative edge weights and diameter $D$. In $\Ot(m \sqrt n)$ expected time one can compute an estimate $\hat{D}$ of $D$ such that $\lfloor 2D/3\rfloor\leq \hat{D} \leq D$.
\end{reminder}

\section{Improving the approximation for unweighted graphs}\label{s-improve-approx}

In this section we show that in some cases it is possible to improve the approximation of the algorithm of Aingworth \etal for unweighted graphs.
Recall that for a graph with diameter $D=3h+2$ their algorithm returns an estimate $\hat{D}$ such that $2h+1 \leq \hat{D} \leq D$. We show that for such a case it is possible to return an estimate $\hat{D}$ such that $2h+2 \leq \hat{D} \leq D$. This is significant for small diameter values. For example, for a graph of diameter $5$ our estimate is at least $4$, while the previous estimate was at least $3$.

We present two algorithms that obtain this improved approximation, one works well for dense graphs and the other for sparse graphs.

\subsection{Dense graphs}

%\begin{algorithm}[t]\label{A-approx}
%\caption{Approx-Diam($G$)}
%$X_1 \gets$ Aingworth($G$)\;
%$X_2 \gets$ Aingworth($G^R$)\;
%
%$\hat{D} \gets \max (X_1,X_2)$\;
%
%\ForEach{$v\in V$}{\ForEach{$u\in V\setminus \{ v \}$}
%{
%
%\If{$BFS^\OUT(u,d^\OUT_s(u)-1)\cap BFS^\IN(v,d^\IN_s(v)-1)=\emptyset \wedge \nexists (u',v')\in E$ s.t. $u'\in BFS^\OUT(u,d^\OUT_s(u)-1) \wedge v'\in BFS^\IN(v,d^\IN_s(v)-1)$}{$\hat{D} \gets \max (\hat{D},d^\OUT_s(u)+d^\IN_s(v))$}}}
%\Return $\hat{D}$\;
%\end{algorithm}

%Compute sets $S^\OUT$ and $S^\IN$ that hit  every out and in neighborhood of size $s$, respectively\;
%$x^o \gets \arg\max_{x\in V} d_s^{\OUT}(x)$\;
%$x^i \gets \arg\max_{x\in V} d_s^{\IN}(x)$\;
%\lForEach{$ v\in S^\OUT$}{Compute $BFS^\OUT(v)$\;}
%\lForEach{$ v\in S^\IN$}{Compute $BFS^\IN(v)$\;}
%
%$y^o \gets \arg\max_{x\in S^\OUT} d^\OUT(x)$\;
%$y^i \gets \arg\max_{x\in S^\IN} d^\IN(x)$\;
%$h^{\max} \gets  \max (d^\OUT(y^o),d^\IN(y^i))$\;
%
%
%\lIf{$(d_s^\OUT(x^o)\leq h) \bigvee (d_s^\IN(x^i)\leq h)$}{\Return $h^{\max}$\;}
%
%
%$F^\OUT \gets \{ v \mid d_s^\OUT(v)>h\}$\;
%$F^\IN \gets \{ v \mid d_s^\IN(v)>h\}$\;
%
%\ForEach{$(u,v)\in F^\OUT \times F^\IN$ }{\If {$BFS^\OUT(u,h)\cap BFS^\IN(v,h)=\emptyset$ and $\nexists (u',v')\in E$ s.t. $u'\in BFS^\OUT(u,h) \wedge v'\in BFS^\IN(v,h)$}{\Return $\max (2h+2,h^{\max})$} }
%

Our algorithm Approx-Diam($G$) works as follows. (The pseudocode is in the appendix.) First, it runs the Aingworth \etal algorithm both on the input graph $G$ and on the input graph with the edge directions reversed, $G^R$. Let $\hat {D}$ be the maximum value returned by these two runs. A byproduct of this step is that for every $v\in V$ we have computed $BFS^\OUT(v,d^\OUT_s(v)-1)$ and $BFS^\IN(v,d^\IN_s(v)-1)$. Next, our algorithm scans all pairs of vertices $u$ and $v$ and checks whether the following condition holds:
$BFS^\OUT(u,d^\OUT_s(u)-1)$ and $BFS^\IN(v,d^\IN_s(v)-1)$ are disjoint and
 there is no edge between $BFS^\OUT(u,d^\OUT_s(u)-1)$ and $BFS^\IN(v,d^\IN_s(v)-1)$. Given a pair of vertices $u$ and $v$
 for which the condition holds,
% with disjoint $BFS^\OUT(u,d^\OUT_s(u)-1)$ and $BFS^\IN(v,d^\IN_s(v)-1)$, and with no edge between $BFS^\OUT(u,d^\OUT_s(u)-1)$ and $BFS^\IN(v,d^\IN_s(v)-1)$,
 the algorithm updates $\hat {D}$ to be the maximum between its current value and $d^\OUT_s(u)+d^\IN_s(v)$.

%
%
%
%it computes for every vertex its $s$ closest incoming and outgoing vertices. Then, it creates two hitting sets $S^\IN$ and $S^\OUT$ that hit the $s$ closest incoming and outgoing sets of every vertex, respectively.
%The size of both $S^\IN$ and $S^\OUT$ is $\Ot(n/s)$.
%
%The algorithm checks whether the $s$ closest outgoing vertices of every vertex are hit by a vertex at distance at most $h$ from $S^\OUT$. In such a case it returns $d^\OUT(y^o)$, where $y^o\in S^\OUT$ is the vertex with the deepest BFS tree. If not, the algorithm checks whether the $s$ closest incoming vertices of every vertex are hit by a vertex at distance at most $h$ from $S^\IN$. In such a case it returns $d^\IN(y^i)$, where $y^i\in S^\IN$ is the vertex with the deepest BFS tree.
%
%If not then the algorithm searches for a pair of vertices $(u,v)\in F^\OUT \times F^\IN$, where $F^x=\{ v \mid d_s^x(v)>h\}$ and $x\in \{\IN,\OUT\}$, such that
%$BFS^\OUT(u,h) \cap BFS^\IN(v,h)$ is empty and there is no edge between $BFS^\OUT(u,h)$ and $BFS^\IN(v,h)$. For such a pair $(u,v)\in F^\OUT \times F^\IN$ it holds that $d(u,v)\geq 2h+2$. If no such pair exists then the maximum distance between vertices of $F^\OUT \times F^\IN$ is at most $2h+1$.
%In the former case the algorithm returns the maximum between $d^\IN(y^i)$, $d^\OUT(y^o)$ and $2h+2$ and in the later case it returns the  maximum between $d^\IN(y^i)$, $d^\OUT(y^o)$ and the maximum distance between vertices of $F^\OUT \times F^\IN$. The algorithm is presented in Algorithm~\ref{A-approx}.
%

We start by showing that the estimate reported by the algorithm is upper-bounded by the graph diameter.

\begin{lemma}\label{L-upper}
Let $G=(V,E)$  be a graph of diameter $D$. If $\hat{D}=$ Approx-Diam($G$), then $\hat{D}\leq D$.
\end{lemma}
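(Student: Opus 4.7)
The plan is to verify that each value the algorithm ever assigns to $\hat{D}$ is the length of some actual directed path in $G$, and hence cannot exceed $D$. There are two sources of updates to $\hat{D}$: the two Aingworth-style runs (on $G$ and on $G^R$), and the additional scan over pairs $(u,v)$. The first source is already handled — by Lemma~\ref{L-Aing-approx} applied to $G$ and to $G^R$, both estimates returned are depths of BFS trees, hence equal to $d(x,y)$ for some pair $(x,y)$, so they are $\leq D$. The real content is to show that whenever the pair condition triggers, $d^\OUT_s(u)+d^\IN_s(v)\leq D$.

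To handle the pair step, I would fix a pair $(u,v)$ satisfying the condition and set $A := BFS^\OUT(u,d^\OUT_s(u)-1)$ and $B := BFS^\IN(v,d^\IN_s(v)-1)$; by hypothesis, $A\cap B=\emptyset$ and there is no edge from $A$ to $B$. Take any shortest path $u=x_0,x_1,\dots,x_k=v$ in $G$, so $k=d(u,v)$. Note that $u\in A$ and $v\in B$, so both sets meet the path. Let $i$ be the largest index with $x_i\in A$ and $j$ the smallest index with $x_j\in B$; disjointness gives $i<j$, and the no-edge condition rules out $j=i+1$, so $j\geq i+2$.

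The two endpoint bounds then drop out from the fact that every $x_t$ lies on a shortest path from $u$: $d(u,x_t)=t$ for all $t$, and symmetrically $d(x_t,v)=k-t$. By maximality of $i$, $x_{i+1}\notin A$, so $d(u,x_{i+1})=i+1\geq d^\OUT_s(u)$, giving $d(u,x_i)=i\geq d^\OUT_s(u)-1$. By minimality of $j$, $x_{j-1}\notin B$, so $d(x_{j-1},v)=k-j+1\geq d^\IN_s(v)$, giving $d(x_j,v)=k-j\geq d^\IN_s(v)-1$. Adding these with $j-i\geq 2$ yields
\[
D \;\geq\; d(u,v) \;=\; i + (j-i) + (k-j) \;\geq\; (d^\OUT_s(u)-1) + 2 + (d^\IN_s(v)-1) \;=\; d^\OUT_s(u)+d^\IN_s(v),
\]
which is exactly the value the algorithm would use to update $\hat{D}$ for this pair.

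I do not foresee a real obstacle here: the proof is essentially a careful accounting along a shortest $u$-to-$v$ path, exploiting disjointness (to get $i<j$) and the no-edge condition (to push $j$ one further step past $i$). The only subtle point is to remember that the inequalities $i+1\geq d^\OUT_s(u)$ and $k-j+1\geq d^\IN_s(v)$ come from the strict-exclusion interpretation of the truncated BFS sets, i.e.\ that vertices outside $BFS^\OUT(u,d^\OUT_s(u)-1)$ are at distance at least $d^\OUT_s(u)$ from $u$; I would make this convention explicit at the start of the proof to avoid any off-by-one confusion.
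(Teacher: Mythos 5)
Your proof is correct and is in substance the same as the paper's: the paper disposes of the Aingworth runs via Lemma~\ref{L-Aing-approx} and then argues that the disjointness and no-edge conditions rule out any $u$-to-$v$ path of length at most $d^\OUT_s(u)+d^\IN_s(v)-1$, which is exactly what your index-chasing along a shortest $u$-to-$v$ path establishes. One small point to make explicit: $i<j$ does not follow from disjointness of two arbitrary sets, but it does here because $d(u,x_t)=t$ and $d(x_t,v)=k-t$ make membership in $A$ a prefix property and membership in $B$ a suffix property of the index $t$ along the shortest path, so $j\leq i$ would put $x_j$ (or $x_i$) in $A\cap B$.
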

\begin{proof}
If Approx-Diam($G$) returns the value that it gets from one of the runs of Aingworth \etal algorithm then the claim follows from Lemma~\ref{L-Aing-approx}. If the algorithm reports $d^\OUT_s(u)+d^\IN_s(v)$ for some pair of vertices $u,v\in V$ it is because there is no edge from $BFS^\OUT(u,d^\OUT_s(u)-1)$ to $BFS^\IN(v,d^\IN_s(v)-1)$, and no vertex in common between the two trees.
This means that there is no path of length at most $d^\OUT_s(u)+d^\IN_s(v)-1$ from $u$ to $v$, and hence, any path from $u$ to $v$, and in particular the shortest one, is of length at least $d^\OUT_s(u)+d^\IN_s(v)\leq D$ as required.
%If Approx-Diam($G,h$) returns either $d^\OUT(y^o)$ or $d^\IN(y^i)$ then the claim trivially holds. If Approx-Diam($G,h$) returns $2h+2$ then there are two vertices $u,v\in V$ such that $BFS^\OUT(u)$ and $BFS^\IN(v)$ are disjoint and there is no edge from a vertex of $BFS^\OUT(u)$ to a vertex of $BFS^\IN(v)$, thus, the shortest path from $u$ to $v$ is of length at least $2h+2$ and as Approx-Diam($G,h$) returns $2h+2$ it follows that $\hat{D}\leq D$. If Approx-Diam($G,h$) returns $\max_{(u,v)\in F^\OUT \times F^\IN} d(u,v)$ then there is a shortest path in the graph of this length, hence, $\hat{D}\leq D$.
\end{proof}

Next, we lower-bound the estimate reported by the algorithm.

\begin{lemma}\label{L-lower}
Let $G=(V,E)$ be a graph of diameter $D=3h+z$, where $h\geq 1$ and $z\in \{0,1,2\}$.
If $\hat{D}=$ Approx-Diam($G$) then $2h+z \leq \hat{D} \leq 3h+z$.
\end{lemma}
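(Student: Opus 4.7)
I will handle the easy direction $z\in\{0,1\}$ and the hard direction $z=2$ separately. For $z\in\{0,1\}$, Lemma~\ref{L-Aing-approx} applied to the Aingworth \etal\ subroutine on $G$ already returns an estimate of at least $2h+z$, so the conclusion of the lemma holds immediately. The real work is the case $z=2$, where the subroutine alone only guarantees $\hat D\ge 2h+1$ and I must extract one more unit, either from an improved choice of witness inside $N_s^{\OUT}(w)$ or from the pair-checking step of Approx-Diam.

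Let $a,b$ be endpoints of a diameter in $G$, and let $w$ be the vertex chosen by the Aingworth execution on $G$, i.e.\ $d_s^{\OUT}(w)=\max_v d_s^{\OUT}(v)$. If the standard Aingworth argument of Lemma~\ref{L-Aing-approx} already yields $\hat D\ge 2h+2$ — specifically if $d_s^{\OUT}(w)\le h$ (via $BFS^{\OUT}$ from a hitting-set vertex within distance $h$ of $a$, reaching $b$ at depth $\ge 2h+2$) or if $d^{\OUT}(w)\ge 2h+2$ (via $BFS^{\OUT}(w)$ itself) — there is nothing to do. So I may assume $d_s^{\OUT}(w)\ge h+1$ and $d(w,b)\le d^{\OUT}(w)\le 2h+1$, which also forces $d(a,w)\ge h+1$.

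The first refinement I plan to use is that when $d_s^{\OUT}(w)\ge h+2$ one has $BFS^{\OUT}(w,h+1)\subseteq N_s^{\OUT}(w)$, so a witness one step farther from $w$ than the one used in Lemma~\ref{L-Aing-approx} already has its $BFS^{\IN}$ tree computed. Concretely, if $d(w,b)\le h$ then $d(a,w)\ge 2h+2$ and, since trivially $w\in N_s^{\OUT}(w)$, the computed tree $BFS^{\IN}(w)$ has depth at least $2h+2$. Otherwise $d(w,b)\ge h+1$ and I pick a vertex $w''$ at distance exactly $h+1$ from $w$ on a shortest $w\to b$ path, so $w''\in N_s^{\OUT}(w)$ and $d(a,w'')\ge 3h+2-(d(w,b)-(h+1))\ge 2h+2$, whence $BFS^{\IN}(w'')$ has depth at least $2h+2$.

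The remaining and most delicate case is $d_s^{\OUT}(w)=h+1$ exactly. Then $d_s^{\OUT}(v)\le h+1$ for every $v$, and combining with $d_s^{\OUT}(a)\ge h+1$ (otherwise the first subcase applied to $a$ already gives $\hat D\ge 2h+2$) one gets $d_s^{\OUT}(a)=h+1$. I then run the same three-way case split on the Aingworth execution on $G^R$: either it already outputs $\hat D\ge 2h+2$, or the reverse improved-witness argument does, or we are in the symmetric tight case, which analogously forces $d_s^{\IN}(b)=h+1$. In this last sub-sub-case Approx-Diam inspects the pair $(a,b)$, whose balls $BFS^{\OUT}(a,h)$ and $BFS^{\IN}(b,h)$ must be disjoint with no edge between them, because an intersection or a connecting edge would give an $a\to b$ path of length at most $2h+1<3h+2=d(a,b)$; hence Approx-Diam updates $\hat D$ to at least $d_s^{\OUT}(a)+d_s^{\IN}(b)=2h+2$. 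The main obstacle is exactly this last configuration, in which neither the hitting-set tree nor the strengthened in-tree on either side can individually extract the missing unit, and the pair-check of Approx-Diam is genuinely needed to certify $d(a,b)\ge 2h+2$.
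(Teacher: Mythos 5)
Your proposal is correct and follows essentially the same route as the paper's proof: the case $z\neq 2$ via Lemma~\ref{L-Aing-approx}, the cases $d_s^{\OUT}(a)\leq h$ (resp.\ $d_s^{\IN}(b)\leq h$ on $G^R$) via a hitting-set vertex, the case $d_s^{\OUT}(w)\geq h+2$ via a witness in $BFS^{\OUT}(w,h+1)\subseteq N_s^{\OUT}(w)$, and the remaining tight case $d_s^{\OUT}(a)=d_s^{\IN}(b)=h+1$ certified by the pair-check giving $\hat{D}\geq d_s^{\OUT}(a)+d_s^{\IN}(b)=2h+2$. Like the paper, you leave the upper bound $\hat{D}\leq 3h+z$ to Lemma~\ref{L-upper}, so nothing further is needed.
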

\begin{proof}
Let $a,b\in V$ such that $d(a,b)=D$. Running the algorithm of Aingworth \etal for $G$ and the reverse $G^R$ of $G$ implies that we get
an approximation of $2h+z$ in the following cases.

{\bf Case 1:} [$z \neq 2$]. From Lemma~\ref{L-Aing-approx}, we have that the estimate is at least $2h+z$.

{\bf Case 2:} [$d^\OUT_s(a) \leq h$ or $d^\IN_s(b)\leq h$]. If $d^\OUT_s(a) \leq h$ then the hitting set computed by the Aingworth \etal algorithm contains a vertex at distance at most $h$ from $a$ and hence one of the BFS trees that it computes has depth at least $2h+z$. Running the algorithm on $G^R$ guarantees that the same holds when $d^\IN_s(b)\leq h$.

{\bf Case 3:} [$\exists w\in V$ s.t. $d^\OUT_s(w)\geq h+2$].
In this case let $w$  be the vertex with the largest $d^\OUT_s(w)$ value. The Aingworth \etal algorithm computes $BFS^\OUT(w)$. If $d^\OUT(w)\geq 2h+2$ then the claim holds so assume that $d^\OUT(w)\leq 2h+1$. The algorithm computes $BFS^\IN(v)$ for every $v\in BFS^\OUT(w,h+1)$ and since $d(w,b)\leq 2h+1$ there is a vertex $w'\in BFS^\OUT(w,h+1)$ such that $d(w',b)\leq h$. As the algorithm computes $BFS^\IN(w')$ and $d(a,w')\geq 2h+z$ the claim holds.

For the rest of the proof we assume that the three cases above do not hold, hence, $z=2$, $d^\OUT_s(a) = h+1$ and $d^\IN_s(b)= h+1$. The second part of our algorithm searches for a pair of vertices $u,v\in V$ such that there is no edge from $BFS^\OUT(u,d^\OUT_s(u)-1)$ to $BFS^\IN(v,d^\IN_s(v)-1)$ (and no vertex in common between the two trees). As $D=d(a,b)=3h+2>2h+1$, and $d^\OUT_s(a)-1 = h$ and $d^\IN_s(b)-1= h$, we have that there is no edge from $BFS^\OUT(a,d^\OUT_s(a)-1)$ to $BFS^\IN(b,d^\IN_s(b)-1)$ (and no vertex in common between the two trees). Since the estimate reported by the algorithm is the maximum among values that also include $d^\OUT_s(a)+d^\IN_s(b)=2h+2$, we get that $\hat{D}\geq 2h+2$, as required.
\end{proof}

%We have proven Theorem~\ref{T-better-approx} from the introduction.

\begin{reminder}{Theorem~\ref{T-better-approx}}
Let $G=(V,E)$ be a directed or undirected unweighted graph with diameter $D=3h+z$, where $h\geq 0$ and $z\in \{0,1,2\}$.
There is an $\tilde{O}(m^{2/3} n^{4/3})$ time algorithm that reports an estimate $\hat{D}$ such that $2h+z \leq \hat{D}\leq D$.
\end{reminder}

%\begin{theorem}\label{T-better-approx}
%Let $G=(V,E)$ be a graph with diameter $D=3h+z$, where $h\geq 1$ and $z\in \{0,1,2\}$.
%Algorithm Approx-Diam($G$) reports an estimate $\hat{D}$ such that $2h+z \leq \hat{D}\leq D$ in $\Ot(m^{2/3}n^{4/3})$ time.
%\end{theorem}

\begin{proof}
The bounds on the estimate follow from Lemma~\ref{L-lower} and Lemma~\ref{L-upper}. Running the algorithm of  Aingworth \etal takes $\Ot(m(s+n/s)+ns^2)$ time.  Searching for a pair of vertices $u,v\in V$ such that there is no edge from $BFS^\OUT(u,d^\OUT_s(u)-1)$ to $BFS^\IN(v,d^\IN_s(v)-1)$ takes $O(n^2s^2)$ time. Setting $s=(m/n)^{1/3}$ gives us the running time.
\end{proof}

We can use Theorem~\ref{T-better-approx} to obtain an even better approximation for undirected graphs.

\begin{reminder}
{Theorem~\ref{undir-approx}}
Let $G=(V,E)$ be an undirected unweighted graph with diameter $D$.
There is an $\tilde{O}(m^{2/3} n^{4/3})$ time algorithm that reports an estimate $\hat{D}$ such that $\lfloor 4D/5\rfloor \leq \hat{D}\leq D$.\end{reminder}

%\begin{corollary}\label{C-undirected-graphs}
%Let $G=(V,E)$ be an unweighted undirected graph with diameter $D$.
%There is an $\Ot(m^{2/3}n^{4/3})$ time algorithm that
%reports an estimate $\hat{D}$ such that $\lfloor 4D/5 \rfloor \leq \hat{D}\leq D$.
%\end{corollary}
\begin{proof}
Using~\cite{DorHZ00} we compute the distances between every pair of vertices in the graph, with an additive error of $2$ in $O(\min(n^{3/2} \sqrt m,n^{7/3}))$ time.
If $\hat{D}$ is the maximum distance minus $2$ then $D-2 \leq \hat{D}\leq D$. For every $D\geq 6$ we have that $D-2\geq \lfloor 4/5D \rfloor$. Thus, when $\hat{D}\geq 4$ we get an estimate of at least $\lfloor 4D/5 \rfloor$. If $\hat{D}=3$ then $D$ might be either $3$, $4$ or $5$, that is, $D=3+z$, where $z\in \{0,1,2\}$. If $D=5$, an estimate of $3$ is not good enough, thus we run Approx-Diam($G$). Let $D'$ be the estimate reported by Approx-Diam($G$). From Lemma~\ref{L-lower} it follows that if $D=5$ then $D'\geq 4$ and we have the required approximation. If $\hat{D}=2$ then $D$ might be either $2$, $3$ or $4$, and for this case we can just use the Aingworth \etal algorithm to get an estimate of $3$ whenever $D=4$ which gives the desired approximation.
\end{proof}

%
%We can now use these lemmas to distinguish between graphs of different diameters.
%
%\begin{theorem}\label{T-decision}
%Let $h>0$ be an integer. Let $G=(V,E)$ be a graph whose diameter $D$ is either $3h+z$ or $2h+z-1$, where $z\in \{0,1,2\}$.  Using Approx-Diam($G,h$) we can decide whether $D=3h+z$ or $D=2h+z-1$.
%\end{theorem}
%\begin{proof}
%Let $\hat{D}=$Approx-Diam($G,h$). From Lemma~\ref{L-lower} it follows that if $D=3h+z$ then $\hat{D}\geq 2h+z$. From Lemma~\ref{L-upper} it follows that if $D=2h+z-1$ then $\hat{D}< 2h+z$.
%\end{proof}

\subsection{Sparse graphs}

We now show that it is possible to obtain the better approximation also in $\Ot(m^{2-\eps})$ time for constant $\eps>0$ when the diameter of the given graph is constant.

Our algorithm, Approx-Diam-Sparse($G,\tilde{h})$ is given an estimate $\tilde{h}$ of $h$ so that 
%$h\leq \tilde{h}\leq 2h+1$ 
$\tilde{h}\geq h$ and works as follows. (The pseudocode can be found in the appendix.) Let $\Delta$ be a parameter and let $H$ be the set of vertices of outdegree at least $\Delta$.
 For every vertex of $H$, the algorithm computes an outgoing BFS tree.
 Then, it computes the distance from every node in $V\setminus H$ to $H$. This is done by adding an extra node $r$ to the graph with edges from each node of $H$ to $r$ and then computing an incoming BFS to $r$ in $O(m)$ time. The distance of a node $v$ to $H$ is its distance to $r$, minus $1$.
 The algorithm then picks the vertex $w$ that is furthest from $H$ and computes $BFS^\OUT(w)$.
 Let $h'=\min\{\tilde{h}+1, d(w,H)\}$. The algorithm computes $BFS^\IN(v)$ for every $v\in BFS^\OUT(w,h')$.
% If $d(w,H)\geq h+1$, then the algorithm computes $BFS^\OUT(w)$ and for each vertex $v\in BFS^\OUT(w,h+1)$ it computes $BFS^\IN(v)$.
 %arbitrary vertex $w$ whose closest vertex of degree at least $\Delta$ is at a distance of at least $h+1$, if such a $w$ exists.
 Finally, it returns the maximum depth of all computed BFS trees.
% We give pseudo-code in Algorithm~\ref{A-approx-m} for the more general case when $h$ may not be known, but instead an estimate of $h$, $\tilde{h}$ is given.

We now analyze the quality of the approximation.

\begin{lemma}\label{L-estimate-sparse}
Let $G=(V,E)$ be a graph of constant diameter $D=3h+z$, where $h\geq 0$ and $z\in \{0,1,2\}$.
If $\hat{D}=$ Approx-Diam-Sparse($G,\tilde{h}$) for $\tilde{h}\geq h$, then $2h+z \leq \hat{D} \leq D$.
\end{lemma}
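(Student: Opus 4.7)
My plan: The upper bound $\hat D \le D$ is immediate, since every quantity the algorithm maximizes over is the depth of a BFS tree rooted at a vertex of $G$, hence a genuine distance in $G$. The interesting direction is $\hat D \ge 2h+z$. I would fix a diametral pair $a,b \in V$ with $d(a,b)=D=3h+z$ and split on how far the ``witness'' vertex $w$ is from the high-degree set $H$, exploiting that $w$ maximizes $d(\cdot,H)$ (so that $d(v,H) \le d(w,H)$ for every $v$).

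\textbf{Case A: $d(w,H)\le h$.} Then every vertex, and in particular $a$, lies within distance $h$ of some $h^\ast \in H$. Since the algorithm explicitly runs $BFS^\OUT(h^\ast)$ for every member of $H$, the directed triangle inequality $d(a,b)\le d(a,h^\ast)+d(h^\ast,b)$ gives
$d(h^\ast,b)\ge D - d(a,h^\ast) \ge D-h = 2h+z,$
so the depth of $BFS^\OUT(h^\ast)$ alone certifies the estimate.

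\textbf{Case B: $d(w,H)\ge h+1$.} The hypothesis $\tilde h \ge h$ forces $h' = \min\{\tilde h+1,\, d(w,H)\} \ge h+1$. If $d^\OUT(w) \ge 2h+z$, the tree $BFS^\OUT(w)$ itself suffices, so assume $d(w,b)\le d^\OUT(w) \le 2h+z-1$. I then sub-split. If $d(w,b)\le h'$, then $b \in BFS^\OUT(w,h')$, so $BFS^\IN(b)$ is among the trees the algorithm computes, and its depth is $\ge d(a,b)=D \ge 2h+z$. Otherwise $d(w,b)\ge h'+1$, and I pick $w'$ at distance exactly $h'$ from $w$ on some shortest $w$-to-$b$ path; then $w' \in BFS^\OUT(w,h')$, so $BFS^\IN(w')$ is computed, and combining $d(w',b)=d(w,b)-h' \le (2h+z-1)-(h+1)=h+z-2$ with the directed triangle inequality $d(a,w')\ge D - d(w',b)$ yields $d(a,w')\ge 3h+z-(h+z-2) = 2h+2 \ge 2h+z$, valid for every $z \in \{0,1,2\}$.

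The delicate point will be verifying $h'\ge h+1$ in Case B --- this is exactly the place where the hypothesis $\tilde h \ge h$ is used --- and noting that the final bound $2h+2$ dominates $2h+z$ for all admissible $z$, which is what pushes the estimate past the $2h+1$ barrier that appeared for $z=2$ in Lemma~\ref{L-Improve-approx}. Everything else is routine bookkeeping with the directed triangle inequality and the defining property of $w$, closely analogous to the argument for the randomized $\tilde O(m\sqrt n)$ algorithm but with ``hit by the hitting set'' replaced by ``close to $H$''.
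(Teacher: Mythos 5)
Your proof is correct and follows essentially the same route as the paper's: case split on the distance from the farthest vertex $w$ to the high-outdegree set $H$, using the outgoing BFS trees from $H$ when $d(w,H)\le h$, and otherwise using $h'\ge h+1$ to find a vertex of $BFS^{\OUT}(w,h')$ on a shortest $w$-to-$b$ path whose incoming BFS has depth at least $2h+2\ge 2h+z$. Your explicit sub-case $d(w,b)\le h'$ (where $b$ itself lies in $BFS^{\OUT}(w,h')$) is a minor bit of bookkeeping the paper leaves implicit, not a different argument.
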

\begin{proof}
First notice, that in any case the algorithm returns a depth of some BFS tree in the graph, thus $\hat{D} \leq D$.

Now, let $a,b\in V$ such that $d(a,b)=D$ and let $H\subseteq V$ be the set of vertices of outdegree at least $\Delta$. Let $y^o\in H$ be the vertex with the deepest outgoing BFS in $H$.
%Let $w$ be an arbitrary vertex that satisfies $BFS^\OUT(w,h) \cap H = \emptyset$, if one exists.
Let $y^i$ be the vertex with the deepest incoming BFS among the vertices of $BFS^\OUT(w,h')$, where $h'=\min\{\tilde{h}+1,d(w,H)\}$.
The algorithm returns as an estimate $\max(d^\OUT(y^o),d^\OUT(w),d^\IN(y^i))$.
% if $w$ exists and at least $d^\OUT(y^o)$ otherwise.

If $d(a,H)\leq h$,
%$BFS^\OUT(a,h) \cap H \neq \emptyset$
then $d^\OUT(y^o)$ is at least $2h+z$ and the estimate is of the desired quality.
So assume that $d(a,H)> h$, and hence $d(w,H)\geq d(a,H)\geq h+1$. Thus $h'\geq h+1$, as we also have $\tilde{h}\geq h$ by assumption.
%, we must have
%%let $w$ be an arbitrary node that satisfies
%$BFS^\OUT(w,h) \cap H = \emptyset$.
%; the vertex furthest away from $H$ is such a node.
Assume also that $BFS^\OUT(w)$ is of depth at most $2h+z-1$ as if it is of depth at least $2h+z$ then the estimate is of the desired quality. Then, there is a vertex $w'\in BFS^\OUT(w,h')$ on the shortest path from $w$ to $b$ with $d(w,w')=h+1$ and hence $d(w',b)\leq h+z-2$.
%$w'\in BFS^\IN(b,h+z-2)\cap BFS^\OUT(w,h+1)$.
%As $d(w',b) \leq h+z-2$ w
As $d(a,b)=3h+z$, we must also have $d(a,w') \geq 2h+2$ and as $d^\IN(y^i) \geq d(a,w')$, the estimate is of the desired quality.
\end{proof}

Next, we analyze the running time of the algorithm.

%\begin{lemma}\label{L-run-time-sparse}
%Let $G=(V,E)$ be a graph of diameter $D=3h+z$, where $h\geq 0$ and $z\in \{0,1,2\}$. If $h$ is known,
%Approx-Diam-Sparse($G,h$) runs in $O(m^2/\Delta + \Delta^{h+1}m)$ time.
%\end{lemma}
%\begin{proof}
%The algorithm computes a BFS tree for every vertex of $H$.
%%As in $H$ we have only vertices with degree at least $\Delta$ it follows that $|H|=m/\Delta$. Thus, the cost of this part is $\Ot(m^2/\Delta)$.
%$|H|=O(m/\Delta)$ since there are at most that many vertices of degree at least $\Delta$. Hence the BFS computation from $H$ takes $O(m^2/\Delta)$ time.
%
%Computing the distances of the nodes in $V\setminus H$ to $H$ takes only $O(m)$ time. If $h$ is known, finding a node $w$ with  $BFS^\OUT(w,h) \cap H = \emptyset$, takes $O(n)$ time.
%Next, if such a $w$ exists, then the algorithm computes $BFS^\OUT(w)$ and $BFS^\IN(v)$ for every $v \in BFS^\OUT(w,h+1)$.
%As $BFS^\OUT(w,h) \cap H = \emptyset$, every vertex of $BFS^\OUT(w,h)$ is of degree at most $\Delta$, thus,
%$|BFS^\OUT(w,h+1)| \leq \Delta^{h+1}$ and the cost of this part of the algorithm is $O(\Delta^{h+1}m)$.
%\end{proof}
%

\begin{lemma}\label{L-run-time-sparse}
Let $G=(V,E)$ be a graph of diameter $D=3h+z$, where $h\geq 0$ and $z\in \{0,1,2\}$. If $\tilde{h}\geq h$,
Approx-Diam-Sparse($G,\tilde{h}$) runs in $O(m^2/\Delta + \Delta^{\tilde{h}+1}m)$ time.
\end{lemma}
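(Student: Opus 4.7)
The plan is to split the cost of Approx-Diam-Sparse into three natural phases and to argue that two of the three are easy, while the third hinges on a single structural observation about the BFS tree rooted at $w$.

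First I would bound $|H|$ in the obvious way: the sum of outdegrees over all vertices is $m$, and every vertex of $H$ contributes at least $\Delta$ to that sum, so $|H| \leq m/\Delta$. Running $BFS^\OUT$ from each vertex of $H$ therefore costs $O(m \cdot |H|) = O(m^2/\Delta)$, which accounts for the first term in the claimed bound. The auxiliary-sink trick used to compute $d(v,H)$ for all $v \in V$, the selection of the vertex $w$ maximising $d(w,H)$, and the single $BFS^\OUT(w)$ are all $O(m)$, so they contribute only lower-order terms.

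The remaining and only interesting cost comes from the batch of $BFS^\IN(v)$ computations, one for each $v \in BFS^\OUT(w,h')$ with $h' = \min\{\tilde{h}+1, d(w,H)\}$. Since each such BFS is $O(m)$, the goal reduces to bounding $|BFS^\OUT(w,h')|$, and this is the main (small) obstacle. The key observation is that $h' \leq d(w,H)$ by definition of $h'$, so every vertex at distance strictly less than $h'$ from $w$ must lie outside $H$ and hence has outdegree at most $\Delta - 1$. A level-by-level counting argument then shows that the BFS tree rooted at $w$ grows by a factor of at most $\Delta$ from one level to the next, up to and including level $h'$, giving
\[
 |BFS^\OUT(w,h')| \;=\; O\!\left(\Delta^{h'}\right) \;\leq\; O\!\left(\Delta^{\tilde{h}+1}\right),
\]
where the final inequality uses $h' \leq \tilde{h}+1$.

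Multiplying by the $O(m)$ cost per incoming BFS yields $O(\Delta^{\tilde{h}+1} m)$ for the third phase. Adding the three phases together gives the claimed running time $O(m^2/\Delta + \Delta^{\tilde{h}+1} m)$. The bound on $|H|$ and the level-expansion estimate are the only nontrivial steps; everything else is direct bookkeeping over the pseudocode.
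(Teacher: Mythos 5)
Your proposal is correct and follows essentially the same argument as the paper: bound $|H|\leq m/\Delta$ for the first term, note the auxiliary-sink and single-BFS steps are $O(m)$, and use $h'\leq d(w,H)$ to conclude that all vertices within distance $h'-1$ of $w$ have outdegree below $\Delta$, so $|BFS^\OUT(w,h')|=O(\Delta^{h'})\leq O(\Delta^{\tilde{h}+1})$. No gaps; the level-by-level expansion bound is exactly the paper's key step.
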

\begin{proof}
The algorithm computes a BFS tree for every vertex of $H$.
%As in $H$ we have only vertices with degree at least $\Delta$ it follows that $|H|=m/\Delta$. Thus, the cost of this part is $\Ot(m^2/\Delta)$.
$|H|=O(m/\Delta)$ since there are at most that many vertices of outdegree at least $\Delta$. Hence the BFS computation from $H$ takes $O(m^2/\Delta)$ time.

Computing the distances of the nodes in $V\setminus H$ to $H$ takes only $O(m)$ time. Picking the node $w$ at largest distance to $H$ takes $O(n)$ time.
%If $h$ is known, finding a node $w$ with  $BFS^\OUT(w,h) \cap H = \emptyset$, takes $O(n)$ time.
%Next, if such a $w$ exists, then the algorithm computes $BFS^\OUT(w)$ and $BFS^\IN(v)$ for every $v \in BFS^\OUT(w,h+1)$.
%As $BFS^\OUT(w,h) \cap H = \emptyset$, every vertex of $BFS^\OUT(w,h)$ is of degree at most $\Delta$, thus,
%$|BFS^\OUT(w,h+1)| \leq \Delta^{h+1}$ and the cost of this part of the algorithm is $O(\Delta^{h+1}m)$.
The algorithm computes $BFS^\OUT(w)$ in $O(m)$ time. It then computes $BFS^\IN(v)$ for every $v \in BFS^\OUT(w,h')$ where $h'\leq \tilde{h}+1$.
%\leq 2h+2$.
Since we also have that $h'\leq d(w,H)$, every $v \in BFS^\OUT(w,h'-1)$ has outdegree at most $\Delta$. Thus, $|BFS^\OUT(w,h')| \leq \Delta^{h'}\leq \Delta^{\tilde{h}+1}$.
The running time of computing $BFS^\IN(v)$ for every $v \in BFS^\OUT(w,h')$ is hence at most $O(m\Delta^{\tilde{h}+1})$.
\end{proof}

%Lemmas~\ref{L-estimate-sparse} and~\ref{L-run-time-sparse} show that if $h$ is known, then an estimate of at least $2h+z$ can be returned in $\Ot(m^2/\Delta + \Delta^{h+1}m)$ time. Typically, we do not know $h$, and getting a good estimate on $h$ is in a sense equivalent to the problem we want to solve-- estimating $3h+z$. We can still get a good running time however by using a worse estimate on $h$ obtained in linear time.
We now prove Theorem~\ref{thm:sparsebetter} from the introduction.

\begin{reminder}{Theorem~\ref{thm:sparsebetter}}
There is an $\tilde{O}(m^{2-1/(2h+3)})$ time deterministic algorithm that computes an estimate $\hat{D}$ with $\lceil 2D/3\rceil \leq \hat{D}\leq D$ for all $m$-edge unweighted graphs of diameter $D=3h+z$  with $h\geq 0$ and $z\in \{0,1,2\}$. In particular, $\hat{D}\geq 2h+z$.
\end{reminder}

\begin{proof}
In $O(m)$ time we can get a $2$-approximation to the diameter, i.e. an estimate $E$ with $D/2\leq E\leq D$. Since $D=3h+z$, we have that
$(E-2)/3\leq h\leq 2E/3$. Setting $\tilde{h}=2E/3$ guarantees that $h\leq \tilde{h}\leq 2h+4/3<2h+2$, and hence $h\leq \tilde{h}\leq 2h+1$.

The quality of the estimate follows from Lemma~\ref{L-estimate-sparse} and by Lemma~\ref{L-run-time-sparse}, the runtime is $O(m^2/\Delta+m\Delta^{2h+2})$.
%We analyze the running time Algorithm~\ref{A-approx-m}($G,\tilde{h})$. The approximation guarantee from Lemma~\ref{L-estimate-sparse} still holds since we only compute more with an estimate higher than $h$.
%
%As in Lemma~\ref{L-run-time-sparse}, the BFS computation from $H$ takes $O(m^2/\Delta)$ time, and
%computing the distances of the nodes in $V\setminus H$ to $H$ takes only $O(m)$ time.
%
%Picking the node $w$ at largest distance to $H$ takes $O(n)$ time.
%The algorithm computes $BFS^\OUT(w)$ in $O(m)$ time. It then computes $BFS^\IN(v)$ for every $v \in BFS^\OUT(w,h')$ where $h'\leq \tilde{h}+1\leq 2h+2$.
%Since we also have that $h'\leq d(w,H)$, every $v \in BFS^\OUT(w,h'-1)$ has outdegree at most $\Delta$. Thus, $|BFS^\OUT(w,h')| \leq \Delta^{h'}\leq \Delta^{2h+2}$. The running time of computing $BFS^\IN(v)$ for every $v \in BFS^\OUT(w,h')$ is hence at most $O(m\Delta^{2h+2})$.
%
%The overall running time becomes $O(m^2/\Delta + m\Delta^{2h+2})$.
Picking $\Delta = m^{1/(2h+3)}$ minimizes the running time at $O(m^{2-1/(2h+3)})$.
\end{proof}

\paragraph{Acknowledgements}

The first author wants to thank Edith Cohen, Haim Kaplan and Yahav Nussbaum for fruitful discussions on the problem.
The second author wants to thank Bob Tarjan for asking whether there is an almost linear time approximation scheme for the diameter. 

%\newpage
\bibliographystyle{plain}
\bibliography{diam}
\newpage

\section{Appendix} 

\begin{algorithm}[ht]\label{A-approx}
\caption{Approx-Diam($G$)}
$X_1 \gets$ Aingworth($G$)\;
$X_2 \gets$ Aingworth($G^R$)\;

$\hat{D} \gets \max (X_1,X_2)$\;

\ForEach{$v\in V$}{\ForEach{$u\in V\setminus \{ v \}$}
{

\If{$BFS^\OUT(u,d^\OUT_s(u)-1)\cap BFS^\IN(v,d^\IN_s(v)-1)=\emptyset \wedge \nexists (u',v')\in E$ s.t. $u'\in BFS^\OUT(u,d^\OUT_s(u)-1) \wedge v'\in BFS^\IN(v,d^\IN_s(v)-1)$}{$\hat{D} \gets \max (\hat{D},d^\OUT_s(u)+d^\IN_s(v))$}}}
\Return $\hat{D}$\;
\end{algorithm}

\begin{algorithm}[h]\label{A-approx-m}
\caption{Approx-Diam-Sparse($G,\tilde{h}$)}

$H \gets \{ v \mid deg(v) \geq \Delta \}$\;
\lForEach{$ v\in H$}{Compute $BFS^\OUT(v)$\;}
$y^o \gets \arg\max_{x\in H} d^\OUT(x)$\;
$\hat{D}\gets d^\OUT(y^o)$\;

%
%\lForEach{$ v\in V$}{Compute $N_s^\OUT(v)$\;}
%Compute a set $S^\OUT$ that hits every out neighborhood of size $s$\;
%$x^o \gets \arg\max_{x\in V} d_s^{\OUT}(x)$\;
%\lForEach{$ v\in S^\OUT$}{Compute $BFS^\OUT(v)$\;}
%
%$y^o \gets \arg\max_{x\in S^\OUT} d^\OUT(x)$\;
%
%\lIf{$d_s^\OUT(x^o)\leq h$}{\Return $d^\OUT(y^o)$\;}

Compute $d(v,H)$ for all $v\in V$ with a single BFS\;
$w \gets $ vertex of largest $d(w,H)$\;
%Compute $F = \{ w \mid d(w,p_\Delta(w))>h\}$ with a single BFS\;
%\lIf{$F =\emptyset$}{\Return $d^\OUT(y^o)$\;}
%$w \gets $ an arbitrary vertex of $F$\;
Compute $BFS^\OUT(w)$\;
$\hat{D}\gets \max\{\hat{D}, d^\OUT(y^o)\}$\;

$h'\gets \min \{\tilde{h}+1, d(w,H)\}$\;
\lForEach{$ v\in BFS^\OUT(w,h')$}{Compute $BFS^\IN(v)$\;}
$y^i \gets \arg\max_{x\in BFS^\OUT(w,h')} d^\IN(x)$\;
$\hat{D}\gets \max\{\hat{D}, d^\IN(y^i)\}$\;

\Return $\hat{D}$\;
\end{algorithm}

\end{document}